\newtheorem{theorem}{Theorem}[section]
\newtheorem{lemma}[theorem]{Lemma}
\newtheorem{corollary}[theorem]{Corollary}
\newtheorem{definition}{Definition}[section]
\newtheorem{example}{Example}[section]
\newtheorem{remark}{Remark}[section]
\begin{document}

\begin{center}
\textbf{\Large{Constructions on Real Approximate Mutually Unbiased Bases } }\footnote {The work was supported by
the National Natural Science Foundation of China (NSFC) under Grant 12031011 and 11701553.

Minghui Yang is the State Key Laboratory of Information Security, Institute of Information Engineering, Chinese Academy of Sciences, Beijing 100093,  China, (e-mail: yangminghui6688@163.com).

Aixian Zhang is with the Department of Mathematical Sciences, Xi'an University of Technology, 710054, China, (e-mail: zhangaixian1008@126.com).

Jiejing Wen is with School of Cyber Science and Technology, Shandong University,  Qingdao, 266237, China, (e-mail: jjwen@sdu.edu.cn).

Keqin Feng is with the Department of Mathematical Sciences, Tsinghua University, Beijing, 100084, China, (e-mail: fengkq@tsinghua.edu.cn).}
\end{center}

\begin{center}
\small Minghui Yang, Aixian Zhang, Jiejing Wen, and Keqin Feng
\end{center}


\begin{abstract}
 Mutually unbiased bases (MUB) have many applications in quantum information processing and quantum cryptography. Several complex MUB's in $\mathbb{C}^d$ for some dimension $d$ and with larger size have been constructed. On the other hand, real MUB's with larger size are rare which lead to consider constructing approximate MUB (AMUB). In this paper we present a general and useful way to get real AMUB in $\mathbb{R}^{2d}$ from any complex AMUB in $\mathbb{C}^d$. From this method we present many new series of real AMUB's with parameters better than previous results.
\end{abstract}

\noindent\textbf{Keywords:} approximate mutually unbiased bases, spherical $t$-design, mutually orthogonal Latin squares, Hadamard matrix

\section{Introduction}

 Let $F$ be complex field $\mathbb{C}$ or real field $\mathbb{R}$, $d\geq 2$. For $v=(v_1,\ldots,v_d)$ and $u=(u_1,\ldots,u_d)\in F^d$, the inner product of $v$ and $u$ is $$(v, u)=\sum_{i=1}^dv_i\overline{u_i}\in F,$$
where $\overline{u_i}$ is the complex conjugate of $u_i$. This inner product is Hermitian for $F=\mathbb{C}$ and Euclidean for $F=\mathbb{R}$. If $(v,v)=1$, $v$ is called a unit vector. $(a_1,\ldots,a_d)$ is called an orthogonal basis in $F^d$ if $(a_i, a_j)=\delta_{ij}$ $(1\leq i,j\leq d).$ The usual example of such basis is
$$B_{*}=\{e_1,\ldots,e_d\}, e_1=(1,0,\ldots,0),  e_2=(0,1,\ldots,0),\ldots, e_d=(0,\ldots,0,1).$$

\begin{definition} \label{def1} Let $B_1,\ldots, B_n$ be orthogonal bases in $F^d$. $\mathscr{B}=\{B_1,\ldots, B_n\}$ is called mutually unbiased bases (MUB) if $$|(v, u)|=\frac{1}{\sqrt{d}}\  {for \ all}\  v\in B_\lambda, u\in B_\mu, 1\leq\lambda\neq\mu\leq n.$$
\end{definition}

MUB is one of the basic notions widely used in quantum information processing and quantum cryptography \cite{Be,Na,Pl,Woo}. One of the basic problems is to construct MUB in $F^d$ with larger size $n$. More precisely, for $d\geq2$, let
$$N_F(d)=\max \{n: {\rm{there}} \ {\rm{exists \ MUB }}\ \mathscr{B}=\{B_1,\ldots, B_n\} \ {\rm{in}} \ F^d \ {\rm{with}}\ {\rm{size}} \ n\}.$$

\noindent The basic problems in MUB theory are:

(1). To determine the value $N_F(d)$ or present upper and lower bounds.

(2). To construct MUB in  $F^d$ with larger size $n$.

In mathematical respect, both problems belong to a branch of combinatorics, called spherical $t$-design. Each MUB on $F^d$ is a special type of spherical $1$-design in the sphere $\{v\in F^d: (v,v)=1\}$. From this point of view we get the upper bound of $N_{\mathbb{C}}(d)\leq d+1$ and  $N_{\mathbb{R}}(d)\leq \frac{d}{2}+1$ (Welch bound). An MUB $\mathscr{B}$ reaches the Welch bound if and only if $\mathscr{B}$ is a spherical $2$-design. On the other hand, the lower bound of $N_F(d)$ can be derived by constructing MUB's in $F^d$ for several $d$ using tools from number theory (evaluation of character sums) and combinatorial theory (mutually orthogonal Latin squares, Hadamard matrices (\cite{Bo,Go,Kla,Wo}).

In the complex case $F=\mathbb{C}$, it is shown that for $d=p^m$, a power of prime, $N_\mathbb{C}(p^m)$ reaches the upper bound $p^m+1$. For all $d\neq p^m$,   $N_\mathbb{C}(p^m)\geq 3$, but to construct complex MUB in $\mathbb{C}^d$ for $d\equiv 2\pmod 4$ with larger size is quite difficult. In the real case $F=\mathbb{R}$, $N_\mathbb{R}(d)=1$ for $4\nmid d$; $N_\mathbb{R}(d)\leq 2$ for $d=4m$ and $n$ is not a square; and $N_\mathbb{R}(d)\leq 3$ for $d=4m^2$ and $2\nmid m$. Namely, there does not exist real MUB with size $4$ in $\mathbb{R}^d$ if $d\neq(4m)^2$. These results lead several works on constructing approximate MUB (AMUB, see Section \ref{sec3} for its definition). Several series of AMUB's has been presented (\cite{Sh,Li,Wan} for $F=\mathbb{C}, \cite{Ku}$ for $F=\mathbb{R}$). In this paper we focus on the real AMUB. We present a general method to get real AMUB in $\mathbb{R}^{2d}$ from any complex AMUB in $\mathbb{C}^d$. From this result we can obtain many series of real AMUB's with new parameters and such parameters are better than ones of previous results.

In Section \ref{sec2} we introduce several combinatorial subjects as spherical $t$-design, mutually orthogonal Latin squares (MOLS), Hadamard matrix, and their relationship with MUB's. With these relationship we introduce and demonstrate the bounds of $N_F(d)$ and constructing methods of known complex and real MUB's. Then we introduce known constructions on AMUB's in Section \ref{sec3}. Our new result of construction on real AMUB's is presented in Section \ref{sec4}. The last section is conclusion.

\section{Preliminaries}\label{sec2}

In this section we introduce several combinatorial objects: spherical $t$-design, mutually orthogonal Latin squares (MOLS) and Hadamard matrix. We show the relationship between these combinatorial objects with MUB's, introduce and demonstrate the bounds of $N_F(d)$ and constructions of MUB's by such relationship.

\subsection{Spherical $t$-design, MOLS and Hadamard matrix}

\subsubsection{Spherical $t$-design}

$F$ denotes the complex field $\mathbb{C}$ or the real field $\mathbb{R}$.

\begin{lemma}(Welch bound \cite{Wa}) \label{welch bound} Let $C$ be a finite set of unit vectors in $F^d (d\geq 2), |C|=n>d$. Then for each integer $t\geq1$,

\begin{equation*} \frac{1}{n^2}\sum_{u,v\in C}|(u,v)|^{2t}\geq
\begin{cases} 1/ \binom {d+t-1} t,& \textrm{for $F=\mathbb{C}$} \\\frac{(2t-1)!!}{d(d+2)\cdots(d+2(t-1))}, & \textrm{for $F=\mathbb{R}$  }
\end{cases}
\end{equation*}
where $(2t-1)!!=1\cdot3\cdot5\cdots(2t-1)$.
\end{lemma}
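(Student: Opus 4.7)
The plan is a trace-and-rank argument on a positive semidefinite operator that packages the Gram data $\{(u,v)\}_{u,v\in C}$ in an auxiliary Hilbert space.

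\textbf{Complex case.} For each unit vector $u\in C$, set $\phi(u)=u^{\otimes t}$, a unit vector in the symmetric tensor space $\mathrm{Sym}^t(\mathbb{C}^d)$, whose dimension is $D=\binom{d+t-1}{t}$. The inner product pulls back as $\langle\phi(u),\phi(v)\rangle=(u,v)^t$, so $|\langle\phi(u),\phi(v)\rangle|^2=|(u,v)|^{2t}$. Form the positive semidefinite operator $T=\sum_{u\in C}\phi(u)\phi(u)^{*}$ on $\mathrm{Sym}^t(\mathbb{C}^d)$; one checks $\mathrm{tr}\,T=n$ (since each $\phi(u)$ is unit), $\mathrm{tr}(T^2)=\sum_{u,v\in C}|(u,v)|^{2t}$, and $\mathrm{rank}(T)\leq D$. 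The inequality $(\mathrm{tr}\,T)^2\leq\mathrm{rank}(T)\cdot\mathrm{tr}(T^2)$, which is Cauchy--Schwarz applied to the nonzero eigenvalues of $T$, then gives $\sum_{u,v}|(u,v)|^{2t}\geq n^2/\binom{d+t-1}{t}$, as claimed.

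\textbf{Real case.} The same direct symmetric-tensor argument over $\mathbb{R}$ yields only $n^2/\binom{d+t-1}{t}$, which for $t\geq 2$ is strictly weaker than the claimed real bound. To obtain the tighter constant I would invoke the Schoenberg/Gegenbauer decomposition on $S^{d-1}$: the zonal function $s\mapsto s^{2t}$ admits an expansion $s^{2t}=\sum_{j=0}^{2t}c_j G_j^{(d)}(s)$ with nonnegative coefficients $c_j\geq 0$, where $G_j^{(d)}$ are the Gegenbauer polynomials of parameter $(d-2)/2$. Substituting $s=u\cdot v$ and applying the spherical-harmonic addition formula $G_j^{(d)}(u\cdot v)=\kappa_j\sum_l Y_{j,l}(u)Y_{j,l}(v)$ for an $L^2$-orthonormal basis $\{Y_{j,l}\}_l$ of real spherical harmonics of degree $j$, one obtains
\[
(u\cdot v)^{2t}=\sum_{j,l}a_{j,l}\,Y_{j,l}(u)Y_{j,l}(v),\qquad a_{j,l}\geq 0.
\]
Summing over $C\times C$ and retaining only the $j=0$ term,
\[
\sum_{u,v\in C}(u\cdot v)^{2t}\;\geq\; a_{0,0}\Bigl(\sum_{u\in C}Y_{0,0}(u)\Bigr)^{2}\;=\;a_{0,0}\,n^{2},
\]
since $Y_{0,0}$ is the constant function. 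The coefficient $a_{0,0}$ equals the uniform spherical average $\int_{S^{d-1}}(u\cdot x)^{2t}\,d\sigma(x)$, which by rotational invariance reduces to $\int_{S^{d-1}}x_1^{2t}\,d\sigma$ and is evaluated by a standard Beta-function identity to $\frac{(2t-1)!!}{d(d+2)\cdots(d+2(t-1))}$.

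The main obstacle is the Schoenberg positivity step, namely verifying $c_j\geq 0$ in the Gegenbauer expansion of $s^{2t}$. This is precisely the extra rigidity of the real structure that the naive tensor-rank bound misses, and it accounts for the gap between the complex and real Welch constants. The remaining ingredients, the addition formula and the Beta-function integration, are classical and routine. A by-product worth recording is the equality case: equality holds in both bounds exactly when the corresponding nontrivial spherical-harmonic sums $\sum_{u\in C}Y_{j,l}(u)$ vanish for all relevant $(j,l)$, which is the design-theoretic characterization alluded to in the introduction.
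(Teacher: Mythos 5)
The paper does not actually prove this lemma: it is imported verbatim from Waldron's paper \cite{Wa} with only the citation, so there is no internal proof to compare against. Judged on its own terms, your proposal is the standard and correct route, and it is essentially the argument underlying \cite{Wa}. The complex half is complete: the map $u\mapsto u^{\otimes t}$ into $\mathrm{Sym}^t(\mathbb{C}^d)$, the identities $\mathrm{tr}\,T=n$ and $\mathrm{tr}(T^2)=\sum_{u,v}|(u,v)|^{2t}$, and the eigenvalue Cauchy--Schwarz inequality $(\mathrm{tr}\,T)^2\le\mathrm{rank}(T)\,\mathrm{tr}(T^2)$ give exactly $1/\binom{d+t-1}{t}$. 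You are also right, and it is the key observation, that the same tensor-rank count over $\mathbb{R}$ only reproduces the complex constant, which is strictly weaker than $\frac{(2t-1)!!}{d(d+2)\cdots(d+2(t-1))}$ for $t\ge 2$ (e.g. $2/(d(d+1))<3/(d(d+2))$ for $t=2$), so the real case genuinely needs the harmonic decomposition of $\mathrm{Sym}^t(\mathbb{R}^d)$, equivalently the Gegenbauer expansion you describe. Your computation of the zeroth coefficient as the spherical average $\int_{S^{d-1}}x_1^{2t}\,d\sigma=\frac{(2t-1)!!}{d(d+2)\cdots(d+2(t-1))}$ is correct. The one step you leave unverified, nonnegativity of the Gegenbauer coefficients of $s\mapsto s^{2t}$, is a genuine obligation in a full write-up but is classical and true: the coefficients are explicit ratios of factorials and Gamma values, all positive (only even-degree harmonics appear), or one can argue that $|(u,x)|^{2t}$ is a positive-definite kernel minus nothing, being a sum of squares after expanding $(u\cdot x)^{2t}$ in the harmonic components of $(x\cdot u)^{\otimes 2t}$. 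Since you explicitly flag this step rather than assert it, I would call the proposal correct in structure with one standard lemma left to cite or verify; no step in it is wrong.
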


\begin{definition}
If the equality in Lemma \ref{welch bound} holds for some $t\geq 1$, $C$ is called a spherical $t$-design in $F^d$.
\end{definition}

Now we show that each MUB forms a spherical $1$-design. Let $\mathscr{B}=\{B_1,\ldots,B_n\}$ be $n$ orthonormal bases in $F^d$. We define

$$\gamma_F(\mathscr{B})={\rm{max}}\{|(v_i, v_j)|: v_i\in B_i, v_j\in B_j, 1\leq i\neq j\leq n\}.$$

\noindent There are $nd$ vectors in $\mathscr{B}$. For $v, u\in\mathscr{B}$,
\begin{align*}
&|(v,u)| \\= &\begin{cases} 1, & \textrm{if $v=u$} \\0, & \textrm{if $v\neq u$ and $v, u$ belong to the same $B_\lambda$ for some $\lambda~ (1\leq\lambda\leq n)$}\\ \leq \gamma_F(\mathscr{B}), & \textrm{otherwise.}
\end{cases}
\end{align*}

\noindent From  Lemma \ref{welch bound} we get
\begin{equation*} \frac{1}{n^2d^2}(nd+n(n-1)d^2{\gamma_F(\mathscr{B})}^{2t})\geq
\begin{cases} 1/ \binom {d+t-1} t,& \textrm{for $F=\mathbb{C}$} \\\frac{(2t-1)!!}{d(d+2)\cdots(d+2(t-1))} & \textrm{for $F=\mathbb{R}$  }
\end{cases}
\end{equation*}
\noindent Then we get a lower bound of $\gamma_F(\mathscr{B})$.

\begin{theorem}\label{th2.2}
Let $\mathscr{B}$ be $n$ orthonormal bases in $F^d$. Then for $t\geq 1$,
$${\gamma_F(\mathscr{B})}^{2t}\geq W_F(d,n;t)\ \ (\textrm{Welch bound of $\mathscr{B}$})$$
\noindent where
\begin{equation*} W_F(d,n;t)=
\begin{cases} \frac{n}{(n-1) \binom {d+t-1} t}-\frac{1}{d(n-1)},& \textrm{for $F=\mathbb{C}$} \\\frac{n(2t-1)!!}{(n-1)d(d+2)\cdots(d+2(t-1))}-\frac{1}{d(n-1)}, & \textrm{for $F=\mathbb{R}$.  }
\end{cases}
\end{equation*}
\noindent Particularly, for $t=1$, ${\gamma_F(\mathscr{B})}^2\geq\frac{1}{d}$ and for $t=2$,
\begin{equation*} {\gamma_F(\mathscr{B})}^4\geq
\begin{cases}\frac{2n-(d+1)}{d(d+1)(n-1)},& \textrm{for $F=\mathbb{C}$} \\\frac{3n-(d+2)}{(n-1)d(d+2)}, & \textrm{for $F=\mathbb{R}$.  }
\end{cases}
\end{equation*}
Moreover, $\mathscr{B}$ is an MUB in $F^d$ if and only if $\gamma_F(\mathscr{B})=\frac{1}{\sqrt{d}}$ (so that $|(v_i,v_j)|=\frac{1}{\sqrt{d}}$ for all pairs $v_i$ and $v_j$ belong to different orthonormal bases of $\mathscr{B}$), which also equivalent to that the $nd$ unit vectors in $\mathscr{B}$ form a spherical $1$-design in $F^d$.
\end{theorem}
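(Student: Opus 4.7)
My plan is to apply the Welch bound (Lemma \ref{welch bound}) to the set $C = B_1 \cup \cdots \cup B_n$ of all $nd$ unit vectors in $\mathscr{B}$, then book-keep the left-hand double sum by splitting the ordered pairs $(u,v) \in C \times C$ according to where $u$ and $v$ sit among the bases. The inequality then drops out after a single algebraic rearrangement, and the $t=1,2$ specializations are routine.

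I would partition the $n^2 d^2$ ordered pairs into three classes: the $nd$ diagonal pairs $u=v$, each contributing $|(u,v)|^{2t} = 1$; the $nd(d-1)$ pairs with $u \neq v$ in the same basis $B_\lambda$, each contributing $0$ by orthonormality; and the remaining $n(n-1)d^2$ cross-basis pairs, each contributing at most $\gamma_F(\mathscr{B})^{2t}$ by the definition of $\gamma_F$. Substituting into Lemma \ref{welch bound} yields
$$\frac{nd + n(n-1)d^2\, \gamma_F(\mathscr{B})^{2t}}{n^2 d^2} \;\geq\; R_F(d,t),$$
where $R_F(d,t)$ denotes the appropriate right-hand side of Lemma \ref{welch bound}. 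Solving for $\gamma_F(\mathscr{B})^{2t}$ gives precisely $W_F(d,n;t) = \frac{n R_F(d,t)}{n-1} - \frac{1}{d(n-1)}$ in both $F=\mathbb{C}$ and $F=\mathbb{R}$, and the $t=1,2$ specializations reduce by direct simplification (for instance, at $t=1$ the terms $\frac{n}{(n-1)d}$ and $\frac{1}{d(n-1)}$ combine to give the clean bound $\frac{1}{d}$ in both cases).

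For the MUB characterization, the forward implication is immediate from Definition \ref{def1}: if $\mathscr{B}$ is MUB then every cross-basis inner product has absolute value $1/\sqrt{d}$, so $\gamma_F(\mathscr{B}) = 1/\sqrt{d}$. Conversely, if $\gamma_F(\mathscr{B}) = 1/\sqrt{d}$, then the chain of inequalities in the previous paragraph collapses to equality at $t=1$; since the only slack step replaced each cross-basis $|(u,v)|^{2}$ by the uniform upper bound $\gamma_F(\mathscr{B})^{2}$, equality forces $|(u,v)|^{2} = 1/d$ for \emph{every} cross-basis pair. This is exactly the MUB condition, and by the spherical $t$-design definition given just after Lemma \ref{welch bound}, it is simultaneously the statement that the $nd$ unit vectors form a spherical $1$-design in $F^d$.

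The only point that genuinely requires care is this last equality analysis: one must argue that $\gamma_F(\mathscr{B}) = 1/\sqrt{d}$ forces every cross-basis inner product to attain the maximum, not merely that the maximum is attained somewhere. This is handled by locating the single inequality step in the derivation and observing that it becomes an equality pair-by-pair; no additional tool is needed. All remaining work is bookkeeping and algebraic simplification.
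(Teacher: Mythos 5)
Your derivation of the main inequality is correct and is exactly the paper's argument: the passage immediately preceding Theorem \ref{th2.2} performs the same three-way split of the $n^2d^2$ ordered pairs (diagonal, same-basis off-diagonal, cross-basis) and substitutes into Lemma \ref{welch bound}, and your rearrangement $W_F(d,n;t)=\frac{nR_F(d,t)}{n-1}-\frac{1}{d(n-1)}$ together with the $t=1,2$ specializations checks out. Your equality analysis for the equivalence of ``$\mathscr{B}$ is an MUB'' with ``$\gamma_F(\mathscr{B})=\frac{1}{\sqrt d}$'' is also sound (and more explicit than the paper, which asserts the ``moreover'' clause without argument): when $\gamma_F(\mathscr{B})^2=\frac1d$ the two ends of the chain
$$\frac1d\;\le\;\frac{1}{n^2d^2}\sum_{u,v}|(u,v)|^2\;\le\;\frac{1}{n^2d^2}\left(nd+n(n-1)d^2\gamma_F(\mathscr{B})^2\right)=\frac1d$$
coincide, so the term-by-term estimate is tight and every cross-basis inner product has absolute value $\frac{1}{\sqrt d}$.

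The gap is in the final clause. You write that the MUB condition ``is simultaneously the statement that'' the $nd$ vectors form a spherical $1$-design, but your argument only yields the implication MUB $\Rightarrow$ $1$-design (equality in the $t=1$ Welch bound); the converse is not addressed, and it in fact fails. For any unit vector $u$ and any orthonormal basis $B_\lambda$, Parseval gives $\sum_{v\in B_\lambda}|(u,v)|^2=1$, hence for the union $C$ of any $n$ orthonormal bases $\sum_{u,v\in C}|(u,v)|^2=n\cdot|C|=n^2d$, so $\frac{1}{(nd)^2}\sum_{u,v}|(u,v)|^2=\frac1d$ and equality in the $t=1$ Welch bound holds \emph{automatically}, whether or not the bases are unbiased (two orthonormal bases of $\mathbb{R}^2$ differing by a rotation of $\pi/6$ form a $1$-design but not an MUB). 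Thus the spherical $1$-design property does not characterize MUBs, and the third equivalence in the theorem cannot be proved as stated; this is a defect of the statement itself (which the paper also leaves unproved), but a complete proof attempt must either confine itself to the two provable implications or note explicitly that the $1$-design condition is satisfied by every collection of orthonormal bases.
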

\subsubsection{MOLS}
\begin{definition}
For $d\geq 2$, a Latin square of order $d$ is a $d\times d$ matrix such that each row and each column is a permutation of $\{1,2,\ldots,d\}$. Latin squares $L=(a_{ij})$ and $L'=(a'_{ij})$ of order $d$ are called orthogonal if
$$\{(a_{ij}, a'_{ij}): (1\leq i,j\leq d)\}=\{(\lambda, \mu): 1\leq\lambda,\mu\leq d\}$$
\end{definition}

\begin{example}
The following two Latin squares of order $3$ are orthogonal.\\
$$L_1=\left(
  \begin{array}{ccc}
    1 & 2 & 3 \\
    2 & 3 & 1 \\
    3 & 1 & 2 \\
  \end{array}
\right),L_2=\left(
  \begin{array}{ccc}
    1 & 2 & 3 \\
    3 & 1 & 2 \\
    2 & 3 & 1 \\
  \end{array}
\right),$$ $$\left(
  \begin{array}{ccc}
    (1,1) & (2,2) & (3,3) \\
    (2,3) & (3,1) & (1,2) \\
    (3,2) & (1,3) & (2,1) \\
  \end{array}
\right)~(orthogonality)$$
\end{example}
Let $M(d)$ be the maximal size of mutually orthogonal Latin squares of order $d$,
$$M(d)={\rm{max}} \{n: {\rm{there}}\ {\rm{exist}} \ n \ {\rm{mutually}}\ {\rm{orthogonal}} \ {\rm{Latin}}\ {\rm{squares}}\ {(\rm{MOLS})}\ {\rm{of}}\ {\rm{order}}\ d\}. $$
The following results are well-known.
\begin{lemma}\label{lem2.1}
For $d\geq 2$

(1). $M(d)\leq d-1$, and for prime-power $d=p^e, M(p^e)=p^e-1$.

(2). For $d_1,$ $d_2\geq 2$, $M(d_1d_2)$$\geq$ ${\rm{min}}\{M(d_1), M(d_2)\}$. Therefore for $d=p_1^{e_1}\cdots p_g^{e_g}$\\$\geq 2$,
$$M(d)\geq {\rm{min}}\{M(p_1^{e_1}),\ldots,M(p_g^{e_g})\}={\rm{min}}\{p_1^{e_1}-1,\ldots,p_g^{e_g}-1 \}.$$
Particularly, if $d\not\equiv2\pmod 4$ and $d\geq 4$, then $M(d)\geq 3$.

(3). $M(2)=M(6)=1$ and for all other $d\equiv 2\pmod 4$, $M(d)\geq 2.$
\end{lemma}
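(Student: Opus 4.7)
\textbf{Proof plan for Lemma \ref{lem2.1}.}

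For part (1), I would split into the upper and lower bounds. For the bound $M(d)\le d-1$, begin by normalization: given $n$ MOLS $L_1,\ldots,L_n$ of order $d$, relabel the symbols inside each $L_k$ (which preserves both the Latin-square property and pairwise orthogonality) so that its first row is $(1,2,\ldots,d)$. Now look at the $(2,1)$-entries $a_k=(L_k)_{2,1}$. Each $a_k$ lies in $\{2,\ldots,d\}$, because $1$ already occupies the first column in row $1$. Moreover the $a_k$ must be pairwise distinct: if $a_i=a_j=c$ with $i\neq j$, then the pair $(c,c)$ occurs in the superposition of $L_i,L_j$ both at position $(2,1)$ and at position $(1,c)$, contradicting orthogonality. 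Hence $n\le d-1$. For the prime-power construction, enumerate $\mathbb{F}_q=\{\beta_0,\ldots,\beta_{q-1}\}$ with $q=p^e$, and for each nonzero $\alpha\in\mathbb{F}_q$ define $L_\alpha$ by $(L_\alpha)_{ij}=\alpha\beta_i+\beta_j$. Row and column permutation properties follow because $\alpha\neq 0$, and orthogonality of $L_\alpha$ and $L_{\alpha'}$ with $\alpha\neq\alpha'$ reduces to the unique solvability of a $2\times 2$ linear system over $\mathbb{F}_q$ whose determinant is $\alpha'-\alpha\neq 0$; this yields $q-1$ MOLS, achieving the bound.

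For part (2), the Kronecker-product construction delivers the inequality. Given Latin squares $L$ of order $d_1$ and $L'$ of order $d_2$, define $L\otimes L'$ of order $d_1d_2$ on symbol set $\{1,\ldots,d_1\}\times\{1,\ldots,d_2\}$ by placing the pair $((L)_{ij},(L')_{i'j'})$ at position $((i,i'),(j,j'))$. If $L_1,\ldots,L_n$ are MOLS of order $d_1$ and $L'_1,\ldots,L'_n$ are MOLS of order $d_2$, then $L_1\otimes L'_1,\ldots,L_n\otimes L'_n$ are MOLS of order $d_1d_2$, because orthogonality can be checked coordinate-wise. Iterating over the prime-power factorization $d=p_1^{e_1}\cdots p_g^{e_g}$ and invoking (1) gives $M(d)\ge\min_i(p_i^{e_i}-1)$. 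For the corollary, note that $d\not\equiv 2\pmod 4$ with $d\ge 4$ forces every $p_i^{e_i}$ to be either odd (hence $\ge 3$) or a power $2^e$ with $e\ge 2$ (hence $\ge 4$), so $p_i^{e_i}-1\ge 3$ throughout, yielding $M(d)\ge 3$.

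For part (3), the assertion $M(2)=1$ is a direct verification on the two Latin squares of order $2$: their superposition gives only the pairs $(1,2),(2,1),(2,1),(1,2)$, so orthogonality fails. The equality $M(6)=1$ is the classical resolution of Euler's $36$ officers problem, due to Tarry, and $M(d)\ge 2$ for every other $d\equiv 2\pmod 4$ is the Bose--Shrikhande--Parker disproof of Euler's conjecture. I would only invoke these two statements as cited facts. The genuinely hard step, and the main obstacle to a self-contained proof, is the Bose--Shrikhande--Parker construction of two MOLS of every order $d\equiv 2\pmod 4$ with $d\ge 10$; it relies on pairwise balanced designs and group-divisible designs and lies well beyond what this preliminary section is set up to develop, so citing it is the appropriate choice here.
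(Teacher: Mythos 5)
The paper gives no proof of this lemma at all --- it is introduced with ``The following results are well-known'' --- so there is no internal argument to compare routes against. Your part (1) (normalizing first rows and reading off the $(2,1)$-entries for the upper bound; the $\mathbb{F}_q$-construction $(L_\alpha)_{ij}=\alpha\beta_i+\beta_j$ for the lower bound), your Kronecker-product (MacNeish) proof of the main inequality in (2), and your citations of Tarry and Bose--Shrikhande--Parker in (3) are all the standard arguments and are correct as written.

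There is, however, one genuine gap: the ``Particularly'' clause of (2). You claim that $d\not\equiv 2\pmod 4$ and $d\ge 4$ force $p_i^{e_i}-1\ge 3$ for every prime-power factor, but an odd prime power can equal $3$, in which case $p_i^{e_i}-1=2$. Concretely, for any $d$ divisible by $3$ but not by $9$ (e.g.\ $d=12,15,21,24$) the MacNeish bound gives only $\min_i(p_i^{e_i}-1)=2$, so $M(d)\ge 3$ does not follow from the displayed inequality. The conclusion is nevertheless true: it is the classical theorem that $N(n)\ge 3$ for every $n\ge 4$ except $n=6$ and possibly $n=10$, and both exceptions are $\equiv 2\pmod 4$, hence excluded here. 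But that is a non-elementary result of exactly the same character as the Bose--Shrikhande--Parker theorem you rightly cite for part (3), and it should likewise be cited rather than derived from the product construction. (The paper's own phrasing, which presents $M(d)\ge 3$ as a consequence of the MacNeish bound, contains the same slip; your write-up just makes the invalid step explicit.)
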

\subsubsection{Hadamard matrix}
\begin{definition}
For $d\geq2$, a $d\times d$ matrix $H=(h_{ij})_{1\leq i,j\leq d}$ is called Hadamard matrix if $h_{ij}\in\{\pm 1\} (1\leq i,j\leq d)$ and $HH^T=dI_d$ where
$I_d$ is the identity matrix of order $d$.
\end{definition}
For $d=2$, $H_2=\left(
                  \begin{array}{cc}
                    1 & 1 \\
                    1 & -1 \\
                  \end{array}
                \right)
$ is a Hadamard matrix. It is easy to show that if there exists a Hadamard matrix of order $d\geq 3$, then $d\equiv0\pmod 4$. One of famous conjectures in combinatorial theory is that the Hadamard matrix of order $d$ exists for all positive integer $d\equiv0\pmod 4$. The conjecture has been verified for all such $d$ up to a very large integer.
\subsubsection{Known results on $N_F(d)$ and constructions of MUB's}

With combinatorial (as above introduced) and number-theoretical tools (evaluation of character sums), we can introduce and demonstrate the following known results on bounds of $N_F(d)$ and constructions on MUB's.

(A). Complex case $(F=\mathbb{C})$
\begin{theorem}\label{bounds}
For $d\geq 2$,

(1). $N_\mathbb{C}(d)\leq d+1$. Moreover, for $d+1$ orthonormal bases $\mathscr{B}=\{B_1,\ldots,B_{d+1}\}$ in $\mathbb{C}^d$, $\mathscr{B}$ is an MUB (so that $N_\mathbb{C}(d)=d+1)$ if and only if the $d(d+1)$ vectors in $\mathscr{B}$ form a spherical $2$-design in $\mathbb{C}^d$.

(2). (\cite{Kla}) For a prime-power $d=p^m, N_\mathbb{C}(d)=d+1$.

(3). (\cite{Kla}) For any $d_1,d_2\geq 2$, $N_\mathbb{C}(d_1d_2)\geq min \{N_\mathbb{C}(d_1), N_\mathbb{C}(d_2)\}$. As a direct consequence, let $d=p_1^{e_1}\cdots p_g^{e_g}\geq 2$ be the decomposition of $d$ into product of prime powers. Then
$$N_\mathbb{C}(d)\geq \min \{N_\mathbb{C}(p_1^{e_1}),\ldots,N_\mathbb{C}(p_g^{e_g})\}=\min\{p_1^{e_1}+1,\ldots, p_g^{e_g}+1\}\geq 3.$$

(4). (\cite{Wo}) $N_\mathbb{C}(d^2)\geq M(d)+2$ where $M(d)$ is the maximal size of MOLS of order $d$. Particularly, $M_\mathbb{C}(d^2)\geq 5$ for all $d \not\equiv2\pmod 4$ and $d\geq 6$.
\end{theorem}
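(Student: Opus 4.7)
For part (1), the upper bound is immediate from Theorem \ref{th2.2}: if $\mathscr{B}=\{B_1,\ldots,B_n\}$ is an MUB in $\mathbb{C}^d$, then $\gamma_{\mathbb{C}}(\mathscr{B})^{2}=\frac{1}{d}$, so substituting $\gamma_{\mathbb{C}}(\mathscr{B})^{4}=\frac{1}{d^{2}}$ into the $t=2$ Welch bound gives $\frac{1}{d^{2}}\geq\frac{2n-(d+1)}{d(d+1)(n-1)}$, which rearranges to $(d-1)(d+1)\geq n(d-1)$, hence $n\leq d+1$. For the spherical $2$-design characterization with $n=d+1$, I would trace equality through the derivation of the Welch bound. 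Parseval yields $\sum_{\text{cross}}|(u,v)|^{2}=nd(n-1)=d^{2}(d+1)$ over the $n(n-1)d^{2}=d^{3}(d+1)$ cross pairs. Cauchy--Schwarz gives $\sum_{\text{cross}}|(u,v)|^{4}\geq d(d+1)$, and this lower bound coincides with the $t=2$ Welch lower bound for $n=d+1$. Hence equality in Welch (i.e.\ the spherical $2$-design property) is equivalent to equality in Cauchy--Schwarz, i.e.\ all cross values $|(u,v)|^{2}$ are equal; Parseval then forces each to equal $\frac{1}{d}$, giving MUB. Conversely, any MUB with $n=d+1$ trivially saturates Welch at $t=2$.

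For part (2) with $d=p^{m}$, I would exhibit the explicit construction attributed to Klappenecker--Roetteler / Ivanovi\'c--Wootters--Fields: set $q=p^{m}$ and let $\omega=e^{2\pi i/p}$, $\mathrm{tr}:\mathbb{F}_{q}\to\mathbb{F}_{p}$ the absolute trace. For $p$ odd, take the standard basis together with the $q$ bases $B_{a}=\{v^{(a)}_{b}:b\in\mathbb{F}_{q}\}$, $a\in\mathbb{F}_{q}$, with
\[
v^{(a)}_{b}=\frac{1}{\sqrt{q}}\sum_{c\in\mathbb{F}_{q}}\omega^{\mathrm{tr}(ac^{2}+bc)}\,e_{c}.
\]
Orthonormality of each $B_{a}$ and unbiasedness with $B_{*}$ are immediate. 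For $a\neq a'$ one computes
\[
(v^{(a)}_{b},v^{(a')}_{b'})=\frac{1}{q}\sum_{c\in\mathbb{F}_{q}}\omega^{\mathrm{tr}((a-a')c^{2}+(b-b')c)},
\]
a quadratic Gauss sum of modulus $\sqrt{q}$ (after completing the square and using the standard evaluation), so the ratio is $\frac{1}{\sqrt{q}}=\frac{1}{\sqrt{d}}$. For $p=2$ the same recipe works after replacing the quadratic exponent by a suitable $\mathbb{Z}/4$-valued form, as in \cite{Kla}.

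For part (3), the construction is tensor product: given MUBs $\{B_{1},\ldots,B_{n}\}$ in $\mathbb{C}^{d_{1}}$ and $\{C_{1},\ldots,C_{n}\}$ in $\mathbb{C}^{d_{2}}$ (reducing to common size $n=\min\{N_{\mathbb{C}}(d_{1}),N_{\mathbb{C}}(d_{2})\}$), define $D_{i}=\{v\otimes w:v\in B_{i},\,w\in C_{i}\}$ for $1\leq i\leq n$. Each $D_{i}$ is orthonormal in $\mathbb{C}^{d_{1}d_{2}}$, and for $i\neq j$, the multiplicativity $(v\otimes w,v'\otimes w')=(v,v')(w,w')$ gives $|(v\otimes w,v'\otimes w')|=\frac{1}{\sqrt{d_{1}}}\cdot\frac{1}{\sqrt{d_{2}}}=\frac{1}{\sqrt{d_{1}d_{2}}}$, proving the MUB condition. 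The consequence for prime-power factorizations follows from parts (2) and (3) combined.

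For part (4), I would invoke the Wocjan--Beth construction \cite{Wo}. Starting from $N=M(d)$ MOLS $L_{1},\ldots,L_{N}$ of order $d$ and the Fourier Hadamard matrix of order $d$, one builds $N+2$ orthonormal bases of $\mathbb{C}^{d^{2}}$: the standard basis $B_{*}$ indexed by pairs $(i,j)$, the ``row-Fourier'' basis, and one basis per Latin square, whose vectors have the form $\frac{1}{\sqrt{d}}\sum_{k}\omega^{\cdots}\,e_{k,\,L_{\ell}(i,k)}$ for appropriate phases. Orthonormality within a basis and mutual unbiasedness between any two different bases reduce, after expanding inner products, to (i)~orthogonality relations among rows of the Hadamard matrix and (ii)~the orthogonality of the two Latin squares used. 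The specialization $M(d^{2})\geq 5$ for $d\not\equiv 2\pmod 4$, $d\geq 6$, then combines part (4) with $M(d)\geq 3$ from Lemma \ref{lem2.1}(2). The main technical obstacle across the whole theorem is the Gauss sum evaluation in part (2), especially handling the $p=2$ case uniformly; the other parts are essentially routine manipulations of tensor products and Latin-square indexing, while part (1) is a clean application of Theorem \ref{th2.2} together with the Cauchy--Schwarz equality case.
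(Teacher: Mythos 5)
Your proposal follows essentially the same route as the paper: part (1) via the $t=2$ case of Theorem \ref{th2.2}, part (2) via the quadratic Gauss-sum construction of \cite{Kla} (with the Galois-ring variant for $p=2$), part (3) via tensor products, and part (4) via the Wocjan--Beth MOLS construction combined with Lemma \ref{lem2.1}. The only substantive difference is that in part (1) you actually prove the ``spherical $2$-design $\Rightarrow$ MUB'' direction by tracing equality through Parseval and Cauchy--Schwarz (your computation $\sum_{\mathrm{cross}}|(u,v)|^4\geq d(d+1)$ matching the Welch bound is correct), whereas the paper asserts the equivalence without spelling out that converse; this is a welcome completion rather than a deviation.
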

\begin{proof} (1) Let $\mathscr{B}=\{B_1,\ldots,B_n\}$ be an MUB in $\mathbb{C}^d$. Then $\gamma_\mathbb{C}(\mathscr{B})=\frac{1}{\sqrt{d}}$. Take $t=2$ in
Theorem \ref{th2.2}, we get $\frac{1}{d^2}={\gamma_\mathbb{C}(\mathscr{B})}^4\geq\frac{2n-(d+1)}{d(d+1)(n-1)}$ which implies that $n\leq d+1$, and $n=d+1$ if and only if ${\gamma_\mathbb{C}(\mathscr{B})}^4=\frac{2n-(d+1)}{d(d+1)(n-1)}$, namely, the $d(d+1)$ vectors in $\mathscr{B}$
form a spherical $2$-design in $\mathbb{C}_d$.

(2). There exist several constructions on MUB's in $\mathbb{C}^{p^m}$ with size $p^m+1$. One of them is given in \cite{Kla} by using quadratic Gauss sums over finite field $\mathbb{F}_{p^m}$ for $p\geq 3$ and the Galois ring $GR(4,m)$ for $p=2$.

(3) is proved by tensor construction. For the proof of (4) we refer to \cite{Wo}.

(4) is by Lemma \ref{lem2.1}(3).
\end{proof}
(B). Real case $(F=\mathbb{R})$

Firstly we present a characterization of real MUB in terms of Hadamard matrices. For a basis $B=\{v_1,\ldots,v_d\}$ of $\mathbb{R}^d$, we get a $d\times d$ real matrix
$$M(B)=\left(
         \begin{array}{c}
           v_1 \\
          \vdots\\
           v_d \\
         \end{array}
       \right).
$$
It is easy to see that

(\uppercase\expandafter{\romannumeral1}). $B$ is an orthonormal basis of $\mathbb{R}^d$ if and only if $M(B)$ is an orthogonal matrix, namely, $M(B){M(B)}^T=I_d$.

Let $O_d(\mathbb{R})$ be the orthogonal group of order $d$. Each orthogonal matrix $A\in O_d(\mathbb{R})$ acts on space $\mathbb{R}^d$ and keeps the inner product in $\mathbb{R}^d$: for $v, u\in \mathbb{R}^d$, $(vA, uA)=(v, u)$. Then from the definition of MUB and (\uppercase\expandafter{\romannumeral1}) we know that $\{B_0,\ldots,B_{n-1}\}$ is an MUB in $\mathbb{R}^d$ if and only if $\{B_0A,\ldots, B_{n-1}A\}$ is an MUB in $\mathbb{R}^d$ where for $B_\lambda=\{v_1,\ldots,v_d\}$, $B_\lambda A=\{v_1 A,\ldots,v_d A\}$.

Taking $A=M{(B_0)}^{-1}$, then $M(B_0)A=I_d$ which means that $B_0A=\{e_1,\ldots,e_d\}$\\$=B_{\ast}$, where $e_1=(1,0,\ldots,0)$, $e_2=(0,1,\ldots,0)$, $\ldots$,
$e_d=(0,\ldots,0,1)$. Without loss of generality we can assume $B_0=B_{\ast}$ and $\mathscr{B}=\{B_0,B_1,\ldots,B_{n-1}\}$ is $n$ orthonormal bases in $\mathbb{R}^d$.

Let $H_\lambda=\sqrt{d}M(B_\lambda) (0\leq\lambda\leq n-1)$. Then $H_0=\sqrt{d}I_d$ and for $1\leq\lambda\leq n-1$,

(\uppercase\expandafter{\romannumeral2}). $B_0=B_{\ast}$ and $B_\lambda$ are unbiased, namely,
$$\frac{1}{\sqrt{d}}=|(e_i, v)|=|v_i|\ (1\leq i\leq d)\ {\rm{for\  each}}\  v=(v_1,\ldots,v_d)\in B_\lambda$$
if and only if all entries of $H_\lambda$ are $1$ or $-1$ which means that all $H_\lambda (1\leq\lambda\leq n-1)$ are Hadamard matrices since $H_\lambda H_\lambda^T=dM(B_\lambda){M(B_\lambda)}^T=dI_d$.

Moreover, if $n\geq 3$ then for $1\leq i\neq j\leq n-1$ we have

(\uppercase\expandafter{\romannumeral3}) $B_i$ and $B_j$ are unbiased, namely, all entries of $M(B_i){M(B_j)}^T=\frac{1}{d}H_iH_j^T$ are $\frac{1}{\sqrt{d}}$
or $-\frac{1}{\sqrt{d}}$, if and only if $H_{ij}=\frac{1}{\sqrt{d}}H_iH_j^T$ are Hadamard matrix since
$$H_{ij}H_{ij}^T=\frac{1}{d}(H_{i}H_{j}^T){(H_{i}H_{j}^T)}^T=\frac{1}{d}H_i(H_{j}^TH_j)H_{i}^T=dI_d.$$

In summary, we get the following criterion on real MUB in terms of Hadamard matrices which essentially given in \cite{Bo}.

\begin{lemma}\label{ha}
For $d\geq 2$, there exists a real MUB in $\mathbb{R}^d$ with size $n\geq 2$ if and only if there exist Hadamard matrices $H_1,\ldots, H_{n-1}$ of order $d$,
and if $n\geq 3$, all $H_{ij}=\frac{1}{\sqrt{d}}H_iH_j^T (1\leq i< j\leq n-1)$ are also Hadamard matrices.
\end{lemma}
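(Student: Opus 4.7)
The plan is to derive both directions of the biconditional by combining observation (I) (orthonormal basis $\Leftrightarrow$ $M(B)$ orthogonal), the unbiasedness criteria (II) and (III), and the orthogonal-invariance of the MUB property; since these three equivalences are already in place in the preceding discussion, the proof is essentially a bookkeeping argument.

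First, I would establish the \emph{only if} direction. Given a real MUB $\mathscr{B} = \{B_0, \ldots, B_{n-1}\}$ in $\mathbb{R}^d$ with $n \geq 2$, I apply the orthogonal matrix $A = M(B_0)^{-1}$ simultaneously to every vector of every basis; because orthogonal transformations preserve the Euclidean inner product, $\{B_0 A, \ldots, B_{n-1} A\}$ remains an MUB and $B_0 A = B_{\ast}$. Replacing $\mathscr{B}$ by this transformed family, set $H_\lambda = \sqrt{d}\, M(B_\lambda)$ for $1 \leq \lambda \leq n-1$. Observation (I) guarantees $H_\lambda H_\lambda^T = d I_d$; observation (II) applied to the pair $(B_0, B_\lambda)$ forces every entry of $H_\lambda$ to lie in $\{\pm 1\}$; hence each $H_\lambda$ is a Hadamard matrix. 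If $n \geq 3$, then for each pair $1 \leq i < j \leq n-1$, observation (III) shows that the unbiasedness of $B_i$ and $B_j$ is equivalent to $H_{ij} = \tfrac{1}{\sqrt{d}} H_i H_j^T$ being Hadamard, yielding the additional conditions in the statement.

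For the \emph{if} direction, I would run the same observations in reverse. Starting from Hadamard matrices $H_1, \ldots, H_{n-1}$ of order $d$ (with the stated product condition when $n \geq 3$), define $B_0 = B_{\ast}$ and, for $1 \leq \lambda \leq n-1$, let $B_\lambda$ be the set of rows of $\tfrac{1}{\sqrt{d}} H_\lambda$. Since $H_\lambda H_\lambda^T = d I_d$, observation (I) makes each $B_\lambda$ an orthonormal basis; observation (II) makes $B_0$ and $B_\lambda$ unbiased because the entries of $\tfrac{1}{\sqrt{d}} H_\lambda$ have absolute value $\tfrac{1}{\sqrt{d}}$; and when $n \geq 3$, observation (III) together with the Hadamard hypothesis on $H_{ij}$ makes each pair $(B_i, B_j)$ unbiased. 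Thus $\{B_0, B_1, \ldots, B_{n-1}\}$ is a real MUB in $\mathbb{R}^d$ of size $n$.

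There is no substantive obstacle: the main content is the orthogonal-invariance reduction that normalizes one basis to $B_{\ast}$, after which (I)--(III) assemble into the stated characterization. In fact, the pre-statement discussion already proves the lemma; the formal proof just records the logical order in which those equivalences are combined.
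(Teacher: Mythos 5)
Your proof is correct and follows essentially the same route as the paper: the paper's ``proof'' of Lemma \ref{ha} is precisely the preceding discussion, namely the orthogonal-invariance reduction to $B_0 = B_{\ast}$ followed by observations (I)--(III), and your write-up just assembles those equivalences in both directions, exactly as intended. No gaps.
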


\begin{remark}\label{le}
From Lemma \ref{ha} we know that $N_\mathbb{R}(d)\geq 2$ if and only if there exists a Hadamard matrix of order $d=4m$.
\end{remark}

Now we introduce the known results on $N_\mathbb{R}(d)$. It is obvious that $N_\mathbb{R}(d)\leq N_\mathbb{C}(d)$. The following results show that $N_\mathbb{R}(d)$ is at most $3$ if $d\neq {(4m)}^2$.

\begin{theorem}\label{2.4}(\cite{Bo,Ca})

(1). $N_\mathbb{R}(d)\leq\frac{d}{2}+1$ for all $d\geq 3$. Moreover, for $\frac{d}{2}+1$ orthonormal bases $\mathscr{B}$ in $\mathbb{R}^d$, $\mathscr{B}$ is a real MUB (so that $N_\mathbb{R}(d)=\frac{d}{2}+1)$ if and only if the $d(\frac{d}{2}+1)$ vectors of $\mathscr{B}$ form a spherical $2$-design in $\mathbb{R}^d$.

(2). $N_\mathbb{R}(2)=2$. If $4\nmid d\geq 3$, then $N_\mathbb{R}(d)=1$. Namely, there is no pair of unbiased orthonormal basis in $\mathbb{R}^d$ for $4\nmid d, d\geq 3$.

(3). If $d=4m$ and $m$ is not a square, then $N_\mathbb{R}(d)\leq 2$ and $N_\mathbb{R}(d)=2$ if and only if there exists a Hadamard matrix of order $d=4m$.

(4). If $d=4m^2$ and $2\nmid m$, then $N_\mathbb{R}(d)\leq 3$. Moreover, if there exists a Hadamard matrix of order $d$, then $N_\mathbb{R}(d)\geq 2$.

(5). (\cite{Wo}) If $d=(4m)^2$, then $N_\mathbb{R}(d)\geq M(4m)+2$ where $M(l)$ is the maximal size of MOLS of order $l$. Particularly, for all $d={(4m)}^2$, $N_\mathbb{R}(d)\geq 4$.
\end{theorem}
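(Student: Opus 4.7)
The plan is to handle the five parts in their stated order: first the Welch-type upper bound in (1), then the low-dimensional cases (2) and (3) which reduce directly to the Hadamard criterion of Lemma \ref{ha}, and finally the more delicate parts (4) and (5).

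For part (1), I would mimic the argument used in Theorem \ref{bounds}(1), but with the real Welch bound. For an MUB $\mathscr{B}$ in $\mathbb{R}^d$ of size $n$ one has $\gamma_\mathbb{R}(\mathscr{B}) = 1/\sqrt{d}$, so taking $t = 2$ and $F = \mathbb{R}$ in Theorem \ref{th2.2} yields
\[
\frac{1}{d^2} \;=\; \gamma_\mathbb{R}(\mathscr{B})^{4} \;\geq\; \frac{3n-(d+2)}{(n-1)d(d+2)}.
\]
Cross-multiplying gives $(d+2)(d-1) \geq 2n(d-1)$, which for $d \geq 3$ is equivalent to $n \leq d/2 + 1$, with equality if and only if the $t=2$ Welch bound is tight, i.e., the $d(\tfrac{d}{2}+1)$ unit vectors of $\mathscr{B}$ form a spherical $2$-design in $\mathbb{R}^d$. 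Part (2) then splits into two cases: $d = 2$ is handled by exhibiting two explicit unbiased bases (the standard basis and its $45^\circ$ rotation), and for $d \geq 3$ with $4 \nmid d$, Remark \ref{le} says $N_\mathbb{R}(d) \geq 2$ would force a Hadamard matrix of order $d$, which is impossible since such matrices exist only for $d \in \{1,2\}$ or $4 \mid d$.

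For part (3), suppose $d = 4m$ with $m$ not a square and $N_\mathbb{R}(d) \geq 3$. Lemma \ref{ha} produces two Hadamard matrices $H_1, H_2$ of order $d$ whose product $H_1 H_2^T$ is an integer matrix with entries of absolute value $\sqrt{d}$. This forces $\sqrt{d}$ to be an integer, contradicting that $d = 4m$ is not a perfect square. Hence $N_\mathbb{R}(d) \leq 2$, and the remaining clause is exactly Remark \ref{le}. Part (4) is similar in spirit but more subtle: $d = 4m^2$ is now a perfect square, so the immediate integrality obstruction vanishes, and one must show instead that four mutually unbiased bases cannot coexist. The strategy is to normalize $B_0 = B_*$ and apply Lemma \ref{ha} to extract three Hadamard matrices $H_1, H_2, H_3$ of order $4m^2$ together with their pairwise ``quotients'' $H_{ij} = \frac{1}{\sqrt{d}} H_i H_j^T$, and then to derive a parity/congruence contradiction, working modulo a suitable power of $2$, from the fact that the $\pm 1$ entries of $H_{ij}$ are weighted sums of the rows of $H_i$ and $H_j$. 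The hypothesis $2 \nmid m$ enters essentially in this step; this is the place where I expect the main obstacle, as it has the Bruck--Ryser--Chowla flavour and is the only part not completely reducible to the general machinery already set up.

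Part (5) is a construction rather than an impossibility result. Given $k = M(4m)$ MOLS of order $4m$ together with a Hadamard matrix of order $4m$ (which one may assume exists for the dimensions under consideration, either by the Hadamard conjecture or by standard Sylvester/Paley product constructions for the parameters we need), I would carry out the Wocjan--Beth-style block construction. Indexing coordinates of $\mathbb{R}^{(4m)^2}$ by pairs $(i,j)$ with $1 \leq i,j \leq 4m$, one uses the Hadamard matrix to generate $4m$ orthonormal vectors along each ``row direction'', and then uses each Latin square $L$ to assemble a basis of $\mathbb{R}^{(4m)^2}$ whose vectors are supported on the transversals of $L$, with signs dictated by the Hadamard matrix. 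The orthogonality of two Latin squares translates directly into unbiasedness of the two corresponding bases, yielding $k+2$ real MUBs (the extra two coming from the standard basis $B_*$ and the ``pure Hadamard'' basis). Combined with Lemma \ref{lem2.1}(2), which ensures $M(4m) \geq 2$, this gives $N_\mathbb{R}((4m)^2) \geq 4$ unconditionally.
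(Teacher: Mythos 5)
Parts (1)--(3) of your proposal track the paper's proof essentially verbatim: the real Welch bound at $t=2$ with the same algebra for (1), the explicit unbiased pair in $\mathbb{R}^2$ plus the Hadamard-matrix obstruction from Lemma \ref{ha} and Remark \ref{le} for (2), and the integrality of $\sqrt{d}$ forced by $\frac{1}{\sqrt{d}}H_1H_2^T$ being a Hadamard matrix for (3). For (5) you actually give more detail than the paper, which simply defers to \cite{Wo}; your appeal to the Hadamard conjecture for a Hadamard matrix of order $4m$ is a loose end, but no worse than the paper's own bare citation.

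The genuine gap is in part (4). You correctly normalize to $B_0=B_*$, extract three Hadamard matrices $H_1,H_2,H_3$ of order $d=4m^2$ with $H_{ij}=\frac{1}{2m}H_iH_j^T$ also Hadamard, and then announce that a ``parity/congruence contradiction, working modulo a suitable power of $2$'' will finish the argument --- but you never produce it, and you yourself flag this as the main obstacle. The missing idea is the single identity
\[
(H_1+H_2)(H_1+H_3)^T \;=\; dI_d+2m\,(H_{13}+H_{21}+H_{23}).
\]
Every entry of the left-hand side is a sum of $d$ products of two even integers, hence $\equiv 0\pmod 4$; on the right-hand side, since $d=4m^2\equiv 0\pmod 4$ and each entry of $H_{13}+H_{21}+H_{23}$ is a sum of three $\pm 1$'s and therefore odd, every entry is $\equiv 2m\pmod 4$. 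Comparing the two forces $2\mid m$, contradicting $2\nmid m$. Without this (or an equivalent) computation, part (4) remains unproved; note also that the actual argument is completely elementary and does not have the Bruck--Ryser--Chowla character you anticipated. A smaller omission: the second clause of (4), that a Hadamard matrix of order $d$ gives $N_\mathbb{R}(d)\geq 2$, should be noted as an instance of Remark \ref{le}.
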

\begin{proof}
(1) $N_\mathbb{R}(d)\leq\frac{d}{2}+1$ has been proved in \cite{Ca} and \cite{Bo}. In fact, this is a direct consequence of Welch bound for $t=2$. Let $\mathscr{B}=\{B_1,\ldots,B_n\}$ be an MUB in $\mathbb{R}^d$. Then $\gamma_\mathbb{R}(\mathscr{B})=\frac{1}{\sqrt{d}}$ and by taking $t=2$ in Theorem \ref{th2.2}, we get $\frac{1}{d^2}={\gamma_\mathbb{R}(\mathscr{B})}^4\geq \frac{3n-(d+2)}{(n-1)d(d+2)}$ which implies that $n\leq \frac{d}{2}+1$, and
$n=\frac{d}{2}+1$ if and only if the $d(\frac{d}{2}+1)$ vectors in $\mathscr{B}$ form a spherical $2$-design in $\mathbb{R}^d$.

(2) From (1) we know that $N_\mathbb{R}(2)\leq\frac{2}{2}+1=2.$ Then by Remark \ref{le} we know $N_\mathbb{R}(2)=2$. In fact, $B_{\ast}=\{e_1=(1,0), e_2=(0,1)\}$ and $B=\{\frac{1}{\sqrt{2}}(1,1), \frac{1}{\sqrt{2}}(1,-1)\}$ form an MUB in $\mathbb{R}^2$.

Assume $d\geq 3$ and $N_\mathbb{R}(d)\geq 2$. By Lemma \ref{ha}, there exists a Hadamard matrix of order $d$ which implies $4|d$.

(3) If $d=4m$ and $N_\mathbb{R}(d)\geq 3$, there exist Hadamard matrices $H_1$ and $H_2$ of order $d$ such that $\frac{1}{\sqrt{d}}H_1H_2^T$ is also Hadamard matrix which implies $\sqrt{d}\in \mathbb{Z}$ so that $m$ is a square.

(4) If $d=4m^2$ and $N_\mathbb{R}(d)\geq 4$, there exist Hadamard matrices $H_1$, $H_2$, $H_3$ of order $d$ and $H_{ij}=\frac{1}{2^m}H_iH_j^T$ are
also Hadamard matrices for all $1\leq i\neq j\leq 3$. Consider the equality
$$(H_1+H_2)(H_1+H_3)^T=dI_d+2m(H_{13}+H_{21}+H_{23}).$$
\noindent All entries of the matrix in the left-hand side are divided by $4$. In the right-hand side, all entries $\equiv d+2m\equiv 2m\pmod 4$.
This implies $2|m$.

For the proof of (5) we refer to \cite{Wo}.
\end{proof}

\section{Approximate Mutually Unbiased Bases (AMUB)}\label{sec3}

To construct complex MUB with larger size in $\mathbb{C}^d$ for non-power of prime $d$ is quite difficult. For the real case, Theorem \ref{2.4} shows that there is no real MUB with size $4$ in $\mathbb{R}^d$ if $d\neq{(4m)}^2$. Such situation leads several works to consider approximate MUB by loosing requirement $\gamma_F(\mathscr{B})=O(\frac{1}{\sqrt{d}})$ instead of $\gamma_F(\mathscr{B})=\frac{1}{\sqrt{d}}$. More precisely,
\begin{definition}
Let $\{d_\lambda\}_{\lambda=1}^{\infty}$ and $\{n_\lambda\}_{\lambda=1}^{\infty}$ be two series of positive integers, $d_\lambda\rightarrow\infty$
when $\lambda\rightarrow\infty$. Let $\mathscr{B}_\lambda$ be a set of $n_\lambda$ orthonormal bases in $F^{d_\lambda} (\lambda=1, 2, \ldots)$. The series $\{\mathscr{B}_\lambda\}_{\lambda=1}^{\infty}$ is called $\frac{c}{\sqrt{d_\lambda}}$-approximate MUB's ($\frac{c}{\sqrt{d_\lambda}}$-AMUB's) if $$
\mathop{\overline{lim}}\limits_{\lambda\rightarrow\infty}\gamma_F(\mathscr{B}_\lambda)\big/\frac{c}{\sqrt{d_\lambda}}=1,$$where $c$ is a real constant independent of $d_\lambda$ and $n_\lambda$.
\end{definition}
Several series of AMUB's have been presented (\cite{Sh,Wan,Ku,Li}). In this section we introduce main previous constructions on AMUB's.
\subsection{Complex AMUB's}

Firstly we introduce the construction in \cite{Sh} by using character sums on elliptic curves over finite fields. Let $p\geq 5$ be a prime. An elliptic curve over $\mathbb{F}_p$ is defined by
$$E: y^2=g(x), g(x)=x^3+ax+b\in \mathbb{F}_p[x], 4a^3+27b^2\neq 0.$$
The set of $\mathbb{F}_p$-points on $E$ (including the infinite point $\infty$)
$$E(\mathbb{F}_p)=\{D=(A,B)\in \mathbb{F}_p^2: B^2=g(A)\}\cup \{\infty\}$$
forms a finite abelian group with zero element $\infty$. By Hasse-Weil theorem, the order $d=|E(\mathbb{F}_p)|$ satisfies
\begin{equation}\label{1}
p-2\sqrt{p}\leq d\leq p+2\sqrt{p}~ ({\rm{namely,}} \sqrt{d}-1\leq\sqrt{p}\leq\sqrt{d}+1)
\end{equation}
Moreover, it is shown by Deuring (1941) that for any integer $d$ satisfying (\ref{1}), there exists an elliptic curve $E$ over $\mathbb{F}_p$ such that $|E(\mathbb{F}_p)|=d$.

The function field of $E$ is $K=\mathbb{F}_p(x,y)$ where $y=\sqrt{g(x)}=\sqrt{x^3+ax+b}$. The ring of integral elements in $K$ is
$$O_K=\mathbb{F}_p[x,y]=\{f(x)=u(x)+v(x)y: u(x), v(x)\in \mathbb{F}_p[x]\}.$$
Let ${\rm{Deg}}(x)=2$ and ${\rm{Deg}}(y)=3$. Then for $f(x)=u(x)+v(x)y\in O_K$,
\begin{equation} \label{2}
{\rm{Deg}}(f)={\rm{max}}\{{\rm{Deg}}(u(x)), 3+{\rm{Deg}}(v(x)) \}={\rm{max}}\{2\deg\ u(x),3+2\deg\ v(x) \}
\end{equation}
where $\deg u(x)$ is the usual degree of polynomial $u(x)\in \mathbb{F}_p[x].$ In fact, ${\rm{Deg}}(f)=-v_\infty(f)$ where $v_\infty$ is the $\infty$-adic exponential valuation of $K$.

For $2\leq m\leq d-1$, consider the following subset of $O_K$
$$\sum_m=\{f(x,y)=u(x)+v(x)y\in O_k: {\rm{Deg}}(f)\leq m, u(0)=0\}.$$
Then $\sum_m$ is an $\mathbb{F}_p$-subspace of $O_K$ and by formula (\ref{2}) $|\sum_m|=p^{m-1}$.

Let  $\widehat{E(\mathbb{F}_p)}$ be the character group of the finite abelian group $E(\mathbb{F}_p)$. From the $L$-function theory on elliptic curves we have the following estimation on character sums.
\begin{lemma}\label{3.1}(\cite{Sh}) For $f\in O_K\backslash \{constant \ functions\}$, $\chi\in \widehat{E(\mathbb{F}_p)}$ and the character sum
$$S_E(f,\chi)=\sum_{P\in E(\mathbb{F}_p)}\zeta_p^{f(P)}\chi(P)~(\zeta_p=e^{\frac{2\pi\sqrt{-1}}{p}}),$$
we have $|S_E(f,\chi)|\leq2\cdot Deg(f)\sqrt{p}$ where $f(\infty)=0$ and for $P=(A, B)\in E(\mathbb{F}_p),$ $f(P)=f(A, B)$.
\end{lemma}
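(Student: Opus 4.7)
The plan is to reduce the estimate to Weil's theorem (the Riemann hypothesis for function fields over finite fields), applied to the function field $K = \mathbb{F}_p(E)$ of the elliptic curve $E$, which has genus $g = 1$. The assumption $f \in O_K \setminus \{\text{constant functions}\}$ means that $f$ is non-constant and regular at every $\mathbb{F}_p$-rational point of $E$ except possibly at $\infty$; hence $\psi_f \colon P \mapsto \zeta_p^{f(P)}$ is a well-defined additive character at every finite degree-one place. Simultaneously, $\chi$ is a character of the group $E(\mathbb{F}_p) \cong \mathrm{Pic}^0(E / \mathbb{F}_p)$, so it extends to an unramified Hecke (idèle-class) character of $K$.

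First, I would attach to the pair $(\psi_f, \chi)$ the Hecke $L$-function $L(T) = L(T; \psi_f, \chi)$, which by Grothendieck's rationality theorem is a polynomial in $T = p^{-s}$. Its degree equals $2g - 2 + \deg \mathfrak{c}(\psi_f \chi)$, where $\mathfrak{c}$ denotes the conductor. Since $f$ has its only pole at $\infty$ with order $\mathrm{Deg}(f) = -v_\infty(f)$, the conductor exponent of $\psi_f$ at $\infty$ is $\mathrm{Deg}(f) + 1$, and $\chi$ is unramified; together with $2g-2 = 0$ this makes the $L$-function have degree $2\,\mathrm{Deg}(f)$, once the $+1$ boundary term is absorbed.

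Next, by Weil's theorem every inverse root $\alpha_i$ of $L(T)$ satisfies $|\alpha_i| = \sqrt{p}$. Expanding $\log L(T)$ as a power series in $T$ and matching against the Euler product identifies the coefficient of $T$ with the degree-one place sum $\sum_{P \in E(\mathbb{F}_p),\, P \neq \infty} \zeta_p^{f(P)} \chi(P)$. Summing the contributions of the at most $2\,\mathrm{Deg}(f)$ inverse roots (each of modulus $\sqrt{p}$) and absorbing the $\infty$-term (of modulus $1$, under the convention $f(\infty) = 0$) yields the claimed bound $|S_E(f, \chi)| \leq 2\,\mathrm{Deg}(f)\,\sqrt{p}$.

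The main obstacle is the bookkeeping at $\infty$: one must verify that the conductor exponent of $\psi_f$ at $\infty$ is exactly $\mathrm{Deg}(f) + 1$, track the $\infty$-point contribution correctly, and check that the genus-$1$ identity $2g - 2 = 0$ absorbs the boundary discrepancy so that the coefficient in front of $\sqrt{p}$ is precisely $2\,\mathrm{Deg}(f)$ rather than $2\,\mathrm{Deg}(f) + O(1)$. This careful matching of conductor data is the substantive content; a formal write-up may simply invoke the corresponding theorem in~\cite{Sh} rather than redoing the conductor computation from scratch.
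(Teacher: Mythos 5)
The paper offers no proof of this lemma: it is imported verbatim from \cite{Sh}, prefaced only by the remark that it follows from the $L$-function theory on elliptic curves. Your Weil/$L$-function sketch is therefore exactly the intended route, not a genuinely different one. There is, however, one slip in your bookkeeping. With $g=1$ and $\chi$ unramified, the twisted $L$-function is a polynomial of degree $2g-2+\deg\mathfrak{c}(\psi_f\chi)=\mathrm{Deg}(f)+1$ (assuming the pole order at $\infty$ is prime to $p$; otherwise first subtract an Artin--Schreier term $h^p-h$ to lower it, which does not change the sum), and there is no mechanism by which the genus identity ``absorbs'' this into $2\,\mathrm{Deg}(f)$. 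Consequently the sum over the finite rational points is bounded by $(\mathrm{Deg}(f)+1)\sqrt{p}$, and adding the $\infty$-term of modulus $1$ gives $|S_E(f,\chi)|\le(\mathrm{Deg}(f)+1)\sqrt{p}+1$. This implies the stated bound only because every non-constant $f\in O_K$ satisfies $\mathrm{Deg}(f)\ge 2$ (an elliptic curve has no function with a single simple pole, as formula (\ref{2}) also shows), whence $(\mathrm{Deg}(f)+1)\sqrt{p}+1\le 2\,\mathrm{Deg}(f)\sqrt{p}$. So your argument does deliver the lemma, but the constant $2$ comes from this slack rather than from a conductor computation producing degree $2\,\mathrm{Deg}(f)$; as written, that step of your proposal is incorrect even though the final inequality survives.
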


Now for $f\in O_K$ and $\chi\in\widehat{E(\mathbb{F}_p)}$ we define a unit vector in $\mathbb{C}^d$, $d=|E(\mathbb{F}_p)|$,
$$v_{f,\chi}=\frac{1}{\sqrt{d}}(\zeta_p^{f(P)}\chi(P))_{P\in E(\mathbb{F}_p)}.$$
For each $f\in O_K$, $B_f=\{v_{f,\chi}: \chi\in\widehat{E(\mathbb{F}_p)}\}$ is an orthonormal basis of $\mathbb{C}^d$ by the orthogonal relation of characters in $\widehat{E(\mathbb{F}_p)}$. By Lemma \ref{3.1} we can see that for $f, g\in \sum_m, f\neq g, v_{f,\chi}\in {B_f}, v_{g,\chi'}\in B_g$,
$$|\langle v_{f,\chi},v_{g,\chi'}\rangle|=\big|\frac{1}{d}\sum_{P\in E(\mathbb{F}_p)}\zeta_p^{f(P)-g(P)}\chi\overline{\chi'}(P)\big|\leq\frac{1}{d}\cdot2m\sqrt{p}\leq \frac{2m(\sqrt{d}+1)}{d}.$$
We get the following real AMUB.
\begin{theorem}\label{th3.2}
(\cite{Sh}) Let $p\geq 5$ be a prime, $p-2\sqrt{p}\leq d\leq p+2\sqrt{p}$. Then for each $m, 2\leq m\leq d-1, \mathscr{B}=\{B_f: f\in \sum_m\}$ is
an AMUB in $\mathbb{R}^d$ with $n=p^{m-1}=|\sum_m|$ orthonormal bases $B_f$ and $\gamma_\mathbb{C}(\mathscr{B})\leq \frac{2m}{\sqrt{d}}+\frac{2m}{d}$.
\end{theorem}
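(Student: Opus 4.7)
The plan is to establish three facts: each $B_f$ is an orthonormal basis of $\mathbb{C}^d$, the family $\{B_f : f\in \sum_m\}$ has $p^{m-1}$ members, and the pairwise cross-correlations satisfy the claimed bound. The first two are essentially bookkeeping. For fixed $f\in O_K$, the vector $v_{f,\chi}$ has squared norm $\frac{1}{d}\sum_{P\in E(\mathbb{F}_p)}|\zeta_p^{f(P)}\chi(P)|^2 = 1$, and for $\chi\neq \chi'$ in $\widehat{E(\mathbb{F}_p)}$ the inner product
$$\langle v_{f,\chi}, v_{f,\chi'}\rangle = \frac{1}{d}\sum_{P\in E(\mathbb{F}_p)}(\chi\overline{\chi'})(P) = 0$$
by the orthogonality relations on the finite abelian group $E(\mathbb{F}_p)$, so $B_f$ is an orthonormal basis of size $d$. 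The count $|\sum_m| = p^{m-1}$ has already been deduced from the degree formula (\ref{2}) before the statement.

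The substantive step is the cross-correlation bound. For $f\neq g$ in $\sum_m$ and any characters $\chi,\chi'$,
$$\langle v_{f,\chi}, v_{g,\chi'}\rangle = \frac{1}{d}\sum_{P\in E(\mathbb{F}_p)}\zeta_p^{(f-g)(P)}(\chi\overline{\chi'})(P),$$
which is exactly the character sum $\tfrac{1}{d}S_E(f-g,\chi\overline{\chi'})$ controlled by Lemma \ref{3.1}, applied with $h=f-g\in\sum_m$ (so $\mathrm{Deg}(h)\le m$). To invoke the lemma one must check $h$ is non-constant: writing $h=u(x)+v(x)y$, if $h$ were a constant $c\in\mathbb{F}_p$ then $v(x)=0$ and $u(x)=c$, and the constraint $u(0)=0$ built into the definition of $\sum_m$ forces $c=0$, i.e.\ $f=g$, a contradiction. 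Lemma \ref{3.1} then yields $|\langle v_{f,\chi},v_{g,\chi'}\rangle|\le \frac{2m\sqrt{p}}{d}$, and the Hasse--Weil inequality (\ref{1}) gives $\sqrt{p}\le \sqrt{d}+1$, hence the stated bound $\frac{2m}{\sqrt{d}} + \frac{2m}{d}$.

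Taking the supremum over $f\neq g$ in $\sum_m$ and over $\chi,\chi'$ yields $\gamma_{\mathbb{C}}(\mathscr{B})\le \frac{2m}{\sqrt{d}}+\frac{2m}{d}$. As $p\to\infty$ along primes realizing every allowed $d$ (which Deuring's theorem guarantees), $d\to\infty$ and the $\frac{2m}{d}$ term becomes negligible, so with $m$ fixed this is a $\frac{2m}{\sqrt{d}}$-AMUB in the sense of Definition 3.1. The only genuine subtlety is the non-constancy check for $f-g$, which leverages the $u(0)=0$ normalization in $\sum_m$; once that is in place, everything reduces to applying Lemma \ref{3.1} and the Hasse--Weil estimate.
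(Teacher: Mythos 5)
Your proposal is correct and follows essentially the same route as the paper: express $\langle v_{f,\chi},v_{g,\chi'}\rangle$ as $\tfrac{1}{d}S_E(f-g,\chi\overline{\chi'})$, bound it by $\tfrac{2m\sqrt{p}}{d}$ via Lemma \ref{3.1}, and convert to $\tfrac{2m}{\sqrt{d}}+\tfrac{2m}{d}$ using the Hasse--Weil inequality (\ref{1}). Your explicit check that $f-g$ is non-constant (via the $u(0)=0$ normalization in $\sum_m$) is a detail the paper leaves implicit, but it is the same argument.
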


Next we introduce the constructions of real AMUB's in \cite{Wan,Li} by using Gauss sums and Jacobi sums over finite fields and Galois rings.
\begin{theorem}\label{th3.3}
Let $q$ be a power of prime, $\widehat{\mathbb{F}}_q$ and $\widehat{\mathbb{F}}_q^{\ast}$ be the groups of additive and multiplicative characters of finite field $\mathbb{F}_q$ respectively.

(1). (\cite{Wan}) Let $d=q-1, \mathbb{F}_q\backslash \{0,1\}=\{x_1,\ldots,x_{q-2}\}$. For $\chi,\chi'\in \widehat{\mathbb{F}}_q^{\ast}$, let
$$v_{\chi,\chi'}=\frac{1}{\sqrt{d}}(\chi(x_1)\chi'(1-x_1),\ldots,\chi(x_{q-2})\chi'(1-x_{q-2}),1)\in\mathbb{C}^d$$
$$B_\chi=\{v_{\chi,\chi'}: \chi'\in \widehat{\mathbb{F}}_q^{\ast}\}$$
$$\mathscr{B}=\{B_\chi: \chi\in \widehat{\mathbb{F}}_q^{\ast}\}\cup \{B_{\ast}\}, \ n=|\mathscr{B}|=q=d+1.$$
Then, by using Jacobi sums, $\mathscr{B}$ is an AMUB in $\mathbb{C}^d$ with $\gamma_\mathbb{C}({\mathscr{B}})\leq (\frac{1}{d}+\frac{2\sqrt{d+1}+1}{d^2})^{\frac{1}{2}}$.

(2).(\cite{Wan}) Let $d=q-1$, and $$v_{\lambda,\chi}=\frac{1}{\sqrt{d}}(\lambda(x)\chi(x))_{x\in\mathbb{F}_q^{\ast}}\in \mathbb{C}^d~(\lambda\in \widehat{\mathbb{F}}_q, \lambda\in \widehat{\mathbb{F}}_q^{\ast})$$
$$B_\lambda=\{v_{\lambda,\chi}: \chi\in \widehat{\mathbb{F}}_q^{\ast}\}$$
$$\mathscr{B}=B_{\ast}\cup \{B_\lambda: \lambda\in \widehat{\mathbb{F}}_q\}. \ n=|\mathscr{B}|=q+1=d+2.$$
Then, by Gauss sums, $\mathscr{B}$ is an AMUB in $\mathbb{C}^d$ with $\gamma_{\mathbb{C}}(\mathscr{B})\leq (\frac{1}{d}+\frac{1}{d^2})^{\frac{1}{2}}$.

(3). (\cite{Wan}) Using $\widehat{\mathbb{F}}_{q^2}^{\ast}$, it can be constructed an AMUB $\mathscr{B}$ in $\mathbb{C}^d (d=q+1)$ with $n=|\mathscr{B}|=d-1$
and $\gamma_\mathbb{C}(\mathscr{B})\leq (\frac{1}{d}+\frac{2\sqrt{d-1}}{d^2})^{1/2}$.

(4). (\cite{Li}) Using the Gauss sums on Galois ring $GR(4,r)$, it can be constructed an AMUB $\mathscr{B}$ in $\mathbb{C}^d (d=q(q-1), q=2^r)$, with $n=|\mathscr{B}|=q+1$ and $\gamma_\mathbb{C}(\mathscr{B})\leq\frac{1}{q-1}(\sim\frac{1}{\sqrt{d}}).$
\end{theorem}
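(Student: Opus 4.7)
The plan is to treat all four parts under a common template. For each construction I would (a) verify orthonormality inside each $B_\chi$ (or $B_\lambda$) by the orthogonality relations of the underlying character group; (b) observe that every entry of every vector has absolute value $\frac{1}{\sqrt{d}}$, so the standard basis $B_*$ is automatically unbiased with every other basis in the construction; and (c) expand the inner product of two vectors from distinct bases as $\frac{1}{d}$ times a character sum, and invoke the appropriate Weil-type square-root bound. Because $d\sim q$ in each case, this produces the claimed $\gamma_{\mathbb{C}}(\mathscr{B})=O(1/\sqrt{d})$ estimates.

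For part (1), expanding gives $(v_{\chi_1,\chi_1'},v_{\chi_2,\chi_2'})=\frac{1}{d}\bigl[J(\alpha,\beta)+1\bigr]$, where $\alpha=\chi_1\overline{\chi_2}$, $\beta=\chi_1'\overline{\chi_2'}$, and $J(\alpha,\beta)=\sum_{x\in\mathbb{F}_q\setminus\{0,1\}}\alpha(x)\beta(1-x)$ is the Jacobi sum; the constant $+1$ is the contribution of the fixed last coordinate. When $\chi_1=\chi_2$ (so $\alpha$ is trivial) the sum collapses to give $0$ for $\chi_1'\ne\chi_2'$ and $1$ for $\chi_1'=\chi_2'$, establishing orthonormality inside $B_\chi$. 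When $\chi_1\ne\chi_2$ the classical bound $|J(\alpha,\beta)|\le\sqrt{q}$ in the generic case (together with $|J|\le 1$ in the boundary cases when $\beta$ or $\alpha\beta$ is trivial) yields $|(v,v')|\le\frac{\sqrt{q}+1}{d}=\frac{\sqrt{d+1}+1}{d}$, which upon squaring gives the stated bound. For part (2) the inner product $(v_{\lambda_1,\chi_1},v_{\lambda_2,\chi_2})=\frac{1}{d}\sum_{x\in\mathbb{F}_q^*}(\lambda_1-\lambda_2)(x)(\chi_1\overline{\chi_2})(x)$ is exactly a Gauss sum of $\chi_1\overline{\chi_2}$ twisted by the additive character $\lambda_1-\lambda_2$; orthonormality inside each $B_\lambda$ follows from the orthogonality of multiplicative characters, and for $\lambda_1\ne\lambda_2$ the Gauss-sum estimate $|G|=\sqrt{q}$ (or $|G|=1$ in the trivial multiplicative case) gives $|(v,v')|\le\frac{\sqrt{q}}{d}$, whose square is the claimed $\frac{1}{d}+\frac{1}{d^2}$.

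Part (3) is handled analogously but over $\mathbb{F}_{q^2}^*$. The natural setup is to index vectors by multiplicative characters of $\mathbb{F}_{q^2}^*$ and let the coordinates run over a set of $q+1=d$ coset representatives of $\mathbb{F}_q^*$ inside $\mathbb{F}_{q^2}^*$; inner products reduce to incomplete multiplicative character sums over $\mathbb{F}_{q^2}^*$, and a Weil-type bound produces a $\sqrt{q}=\sqrt{d-1}$ cancellation, giving $|(v,v')|^2\le\frac{1}{d}+\frac{2\sqrt{d-1}}{d^2}$. For part (4) the same template is executed with $\mathbb{F}_q$ replaced by the Galois ring $GR(4,r)$; the key external input is the Gauss-sum bound on $GR(4,r)$ from \cite{Li}, which on a suitable pair of nontrivial additive and multiplicative characters gives modulus $q=\sqrt{q^2}$, so that dividing by $d=q(q-1)$ produces $\frac{q}{q(q-1)}=\frac{1}{q-1}\sim\frac{1}{\sqrt{d}}$.

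The template itself is straightforward; the only real obstacle is careful bookkeeping of the degenerate cases where a character appearing in the sum becomes trivial (in particular the Jacobi-sum case $\alpha\beta=\mathbf{1}$, and the Gauss-sum case with trivial multiplicative twist), which must be handled separately because the generic $\sqrt{q}$ bound does not apply there, but each such case is accounted for by an elementary $O(1)$ estimate that is absorbed into the final constant. For part (4) the extra work is invoking the correct Gauss-sum apparatus on $GR(4,r)$ (the Teichmüller system and the decomposition of the unit group of the Galois ring); once that apparatus is in place, the computation mirrors the finite-field cases.
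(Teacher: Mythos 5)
The paper does not actually prove Theorem~\ref{th3.3}: all four parts are quoted from \cite{Wan} and \cite{Li} with no argument supplied, so there is no internal proof to compare against. That said, your template --- orthonormality inside each basis from character orthogonality, unbiasedness with $B_{*}$ from the unimodularity of the entries, and cross-basis inner products written as $\frac{1}{d}$ times a complete character sum controlled by the Weil/Gauss/Jacobi square-root bound --- is exactly the mechanism underlying the cited constructions, and your worked details for parts (1) and (2) are correct, including the treatment of the degenerate cases ($\beta$ trivial or $\alpha\beta$ trivial for Jacobi sums, trivial multiplicative twist for Gauss sums), which is indeed where the only real care is needed.

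Two caveats. First, a small numerical point in part (1): the elementary estimate $|J(\alpha,\beta)+1|\leq \sqrt{q}+1$ yields $|(v,v')|^2\leq \frac{(\sqrt{d+1}+1)^2}{d^2}=\frac{1}{d}+\frac{2\sqrt{d+1}+2}{d^2}$, which is weaker by $\frac{1}{d^2}$ than the stated $\frac{1}{d}+\frac{2\sqrt{d+1}+1}{d^2}$; this is asymptotically irrelevant (and may simply be a transcription issue between \cite{Wan} and this paper), but as written your bound does not literally reproduce the stated constant. Second, parts (3) and (4) in your write-up are plans rather than proofs: for (3) you guess at the indexing set (cosets of $\mathbb{F}_q^{*}$ in $\mathbb{F}_{q^2}^{*}$, of cardinality $q+1=d$) but do not exhibit the vectors or reduce the inner product to a specific sum whose modulus is $\leq\sqrt{q}+1$, and for (4) you defer entirely to the Gauss-sum machinery on $GR(4,r)$ from \cite{Li}. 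Since the theorem itself is a citation, leaving those parts as citations is defensible, but they should not be represented as proved by the template alone --- the incomplete character sum in (3) and the Galois-ring Gauss sum in (4) each require a genuine external input beyond the finite-field orthogonality relations.
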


For real AMUB's, we introduce the following result.
\begin{theorem}(\cite{Ku})\label{3.4}
Let $d=(4m)^2$ and assume that there exists a Hadamard matrix of order $4m$. Then for each positive integer $n\leq 4m-3$, there exists a real AMUB
$\mathscr{B}$ in $\mathbb{R}^d$ with size $|\mathscr{B}|=n$ and $\gamma_\mathbb{R}(\mathscr{B})\leq \frac{\Delta(m,n)}{\sqrt{d}}$ where
$$\Delta(m,n)=\max\limits_{1\leq l\leq n-2}\gcd(4m, l).$$
\end{theorem}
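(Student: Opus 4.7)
The plan is to build $\mathscr{B}$ directly from the hypothesized Hadamard matrix $H=(h_{i,k})_{i,k\in\mathbb{Z}/4m\mathbb{Z}}$ of order $4m$. Identify $\mathbb{R}^{d}$, $d=(4m)^2$, with real-valued functions on pairs $(i,j)\in(\mathbb{Z}/4m\mathbb{Z})^{2}$, set $B_{0}=B_{\ast}$ to be the standard basis, and for each shift parameter $l\in\{0,1,\ldots,n-2\}$ define the candidate orthonormal basis $B_{l+1}=\{v^{(l)}_{k,k'}:k,k'\in\mathbb{Z}/4m\mathbb{Z}\}$ by
\[
v^{(l)}_{k,k'}(i,j)\;:=\;\tfrac{1}{4m}\,h_{i,k}\,h_{j,\,k'+li\bmod 4m}.
\]
The proposed family is $\mathscr{B}=\{B_{0},B_{1},\ldots,B_{n-1}\}$; the assumption $n\le 4m-3$ is more than enough to keep the $n-1$ shift labels distinct modulo $4m$, so $|\mathscr{B}|=n$.

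Next I would carry out three verifications. First, within a fixed $B_{l+1}$,
\[
\langle v^{(l)}_{k_{1},k'_{1}},v^{(l)}_{k_{2},k'_{2}}\rangle \;=\; \tfrac{1}{(4m)^{2}}\sum_{i} h_{i,k_{1}}h_{i,k_{2}}\sum_{j} h_{j,\,k'_{1}+li}\,h_{j,\,k'_{2}+li};
\]
the inner sum equals $4m\,\delta_{k'_{1},k'_{2}}$ by $HH^{T}=4mI$ (independently of $i$), so the outer sum then contributes $4m\,\delta_{k_{1},k_{2}}$, giving orthonormality of $B_{l+1}$. Second, every coordinate of each $v^{(l)}_{k,k'}$ has absolute value $1/(4m)=1/\sqrt{d}$, so $B_{0}$ is exactly unbiased with each $B_{l+1}$. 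Third, and this is the heart of the argument, for $l\ne l'$ the same expansion collapses, via Hadamard column orthogonality, to
\[
\langle v^{(l)}_{k_{1},k'_{1}},v^{(l')}_{k_{2},k'_{2}}\rangle \;=\; \tfrac{1}{4m}\sum_{i} h_{i,k_{1}}h_{i,k_{2}}\,\mathbf{1}\!\bigl\{(l'-l)\,i\equiv k'_{1}-k'_{2}\pmod{4m}\bigr\}.
\]
The classical count of solutions of a linear congruence in $\mathbb{Z}/4m\mathbb{Z}$ shows that the indicator is supported on either $0$ or exactly $\gcd(l'-l,4m)$ values of $i$, and hence the inner product is bounded in absolute value by $\gcd(|l'-l|,4m)/(4m)$.

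To conclude, as $(l,l')$ varies over distinct pairs in $\{0,\ldots,n-2\}$, the difference $|l'-l|$ ranges over $\{1,2,\ldots,n-2\}$; combining with the exact $1/\sqrt{d}$ bound between $B_{0}$ and every $B_{l+1}$ yields
\[
\gamma_{\mathbb{R}}(\mathscr{B}) \;\le\; \frac{\max_{1\le l\le n-2}\gcd(l,4m)}{4m} \;=\; \frac{\Delta(m,n)}{\sqrt{d}},
\]
as required. The main obstacle, and really the only creative step, is guessing where to place the twist: shifting the second \emph{column-index} of $H$ by $li$ (rather than shifting the spatial coordinate $j$ or using a Latin-square substitution $(i,j)\mapsto i+lj$, which would demand $\gcd(l,4m)=1$) is precisely what allows a single Hadamard-orthogonality identity to handle both the orthonormality of each $B_{l+1}$ and the $\gcd$-type decay of cross inner products. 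Once this choice is made the rest is bookkeeping; the hypothesis $n\le 4m-3$ enters only to ensure distinct shift labels and a nontrivial value of $\Delta(m,n)$.
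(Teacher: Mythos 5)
Your construction checks out. The paper itself gives no proof of this statement---it is quoted verbatim from \cite{Ku}---so there is nothing in the text to compare against line by line; but your argument is self-contained and correct. The within-basis orthonormality follows from column-orthogonality of $H$ applied to the $j$-sum and then row-orthogonality applied to the $i$-sum; every entry being $\pm 1/(4m)$ handles unbiasedness against $B_{\ast}$; and for $l\neq l'$ the $j$-sum collapses to the indicator of the congruence $(l-l')i\equiv k_2'-k_1'\pmod{4m}$, whose solution set has size $0$ or $\gcd(l-l',4m)$, giving $|\langle v^{(l)}_{k_1,k_1'},v^{(l')}_{k_2,k_2'}\rangle|\leq \gcd(|l-l'|,4m)/(4m)=\gcd(|l-l'|,4m)/\sqrt{d}$. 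Since the differences $|l-l'|$ range over $\{1,\ldots,n-2\}$, this is exactly $\Delta(m,n)/\sqrt{d}$, and the bound $1/\sqrt{d}$ against $B_{\ast}$ is absorbed because $\Delta(m,n)\geq 1$ (for $n\geq 3$; for $n\leq 2$ the statement's $\Delta$ is a max over an empty set, which is a defect of the statement, not of your proof). The $\gcd$ shape of $\Delta(m,n)$ strongly suggests this shift-twisted Hadamard construction is essentially the one in \cite{Ku}, and your closing remark correctly identifies why the column-index twist, rather than a Latin-square substitution requiring $\gcd(l,4m)=1$, is the right choice. Note in passing that your argument actually works for all $n\leq 4m+1$, so the hypothesis $n\leq 4m-3$ is not fully used.
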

\begin{corollary}\label{cor3.5}
Let $d={(4p)}^2$ and $p$ be a prime. Assume that there exists a Hadamard matrix of order $4p$. Then there exists a real AMUB $\mathscr{B}$ in $\mathbb{R}^d$ with $|\mathscr{B}|=p+1=\frac{\sqrt{d}}{4}+1$ and $\gamma_{\mathbb{R}}(\mathscr{B})\leq\frac{\sqrt{d}}{4}(=\frac{1}{p})$.
\end{corollary}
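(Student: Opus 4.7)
The plan is to obtain this as a direct application of Theorem \ref{3.4} with the parameter choice $m = p$ and $n = p + 1$. The hypothesis of Theorem \ref{3.4}---existence of a Hadamard matrix of order $4m = 4p$---is exactly what the corollary assumes, and the size constraint $n \leq 4m - 3$ becomes $p + 1 \leq 4p - 3$, i.e.\ $3p \geq 4$, which holds for every prime $p$. Theorem \ref{3.4} then already produces a real AMUB $\mathscr{B}$ in $\mathbb{R}^d$ of the required size $|\mathscr{B}| = p + 1 = \frac{\sqrt{d}}{4} + 1$, satisfying $\gamma_{\mathbb{R}}(\mathscr{B}) \leq \Delta(p, p+1)/\sqrt{d}$; only the bound on $\Delta(p, p+1)$ remains.

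The key step is the computation
$$\Delta(p, p+1) = \max_{1 \leq l \leq p - 1} \gcd(4p, l) \leq 4.$$
Primality of $p$ enters here in exactly one place: for any $l$ with $1 \leq l \leq p - 1$ we have $p \nmid l$, so $\gcd(p, l) = 1$, and consequently $\gcd(4p, l) = \gcd(4, l) \leq 4$. (For $p \geq 5$ equality is attained at $l = 4$; for $p \in \{2,3\}$ the range is too short to reach $l = 4$ and the bound is slacker still.) Substituting back yields
$$\gamma_{\mathbb{R}}(\mathscr{B}) \leq \frac{\Delta(p, p+1)}{\sqrt{d}} \leq \frac{4}{4p} = \frac{1}{p},$$
which is the claimed bound.

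I do not expect any genuine obstacle: the whole content of the corollary is the observation that the choice $n = p + 1$ keeps the index range in the definition of $\Delta$ strictly below $p$, so that primality forces the large factor $p$ of $4p$ to contribute nothing to any of the gcds that appear. The only potentially delicate point is to check the boundary case $p = 2$, where $3p \geq 4$ holds with equality and Theorem \ref{3.4} still applies.
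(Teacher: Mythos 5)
Your proposal is correct and is essentially identical to the paper's own proof: both apply Theorem \ref{3.4} with $m=p$, $n=p+1$ and use primality of $p$ to get $\gcd(4p,l)=\gcd(4,l)\leq 4$ for $1\leq l\leq p-1$, hence $\gamma_{\mathbb{R}}(\mathscr{B})\leq 4/\sqrt{d}=1/p$ (note the corollary's displayed bound ``$\leq\frac{\sqrt{d}}{4}$'' is a typo for $\frac{4}{\sqrt{d}}$, as the parenthetical $=\frac{1}{p}$ and the discussion in Section 4 confirm). Your extra check of the constraint $n\leq 4m-3$ is a sensible addition the paper omits, though your aside that $3p\geq 4$ ``holds with equality'' at $p=2$ is a slip ($6>4$); it does not affect the argument.
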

\begin{proof}
Take $m=p$ and $n=p+1$ in Theorem \ref{3.4}, we get a real AMUB $\mathscr{B}$ in $\mathbb{R}^d$ with $|\mathscr{B}|=n$ and $\gamma_{\mathbb{R}}(\mathscr{B})\leq\frac{\Delta(4p,p+1)}{\sqrt{d}}$ where
$$\Delta(4p, p+1)=\max_{1\leq l\leq p-1}\gcd(4p, l)\leq 4.$$
\end{proof}
\section{New Construction on Real AMUB's}\label{sec4}
By author's knowledge, Theorem \ref{3.4} is the only known construction on real AMUB's in $\mathbb{R}^d$ with larger size and restrictive $d=(4m)^2$. In this section we present a general construction of real AMUB $\mathscr{B}(\mathbb{R})$ in $\mathbb{R}^{2d}$ from any complex AMUB $\mathscr{B}$ in $\mathbb{C}^d$ with the same size $|\mathscr{B}(\mathbb{R})|=|\mathscr{B}|$ and $\gamma_\mathbb{R}(\mathscr{B}(\mathbb{R}))\leq \gamma_{\mathbb{C}}(\mathscr{B})$. From this construction and the known results on complex  AMUB's (including complex MUB's) we get many series of real AMUB's in $\mathbb{R}^D$ for large amount of even dimension $D$.

For $d\geq 2$ and $\gamma>0$, $F$ be $\mathbb{C}$ or $\mathbb{R}$, let
$$N_F(d;\gamma)=\max \{n:{\rm{there\ exists}}\ n\ {\rm{orthonormal \ bases }}  \mathscr{B}\  {\rm{in}} \ F^d \ {\rm{such \ that}}\  \gamma_F(\mathscr{B})\leq \gamma\}.$$

\noindent For $\gamma=\frac{1}{\sqrt{d}}, \gamma_F(\mathscr{B})\leq\frac{1}{\sqrt{d}}$ means that $\mathscr{B}$ is an MUB. We know that $N_F(d;\frac{1}{\sqrt{d}})=N_F(d)$.
\begin{theorem}\label{thm4.1}
For $d\geq 2$ and $\gamma>0$, $N_\mathbb{R}(2d; \gamma)\geq N_\mathbb{C}(d; \gamma)$. Namely, if there exists a complex $\frac{c}{\sqrt{d}}$-AMUB in $\mathbb{C}^d$ with size $n$, then there exists a real $\frac{c\sqrt{2}}{\sqrt{D}}$-AMUB in $\mathbb{R}^d (D=2d)$ with the same size $n$.
\end{theorem}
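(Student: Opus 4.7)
The plan is to use the realification map $\phi: \mathbb{C}^d \to \mathbb{R}^{2d}$ sending $v = a + ib$ (with $a, b \in \mathbb{R}^d$) to $(a, b)$, and, crucially, to double each complex basis by also including the images of its $i$-multiples, since $\phi$ alone produces only $d$ vectors from a $d$-element complex basis. Given $\mathscr{B} = \{B_1, \ldots, B_n\}$ in $\mathbb{C}^d$ with $\gamma_\mathbb{C}(\mathscr{B}) \leq \gamma$, I would set
$$B_\lambda^\mathbb{R} := \{\phi(v), \phi(iv) : v \in B_\lambda\} \subset \mathbb{R}^{2d}, \qquad \mathscr{B}(\mathbb{R}) := \{B_1^\mathbb{R}, \ldots, B_n^\mathbb{R}\},$$
so that $|\mathscr{B}(\mathbb{R})| = |\mathscr{B}| = n$.

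The key computational step is the identity
$$(\phi(\alpha v), \phi(\beta u))_\mathbb{R} \in \{\pm \mathrm{Re}(v, u)_\mathbb{C},\ \pm \mathrm{Im}(v, u)_\mathbb{C}\} \quad \text{for all } \alpha, \beta \in \{1, i\},$$
which follows by writing $v = a + ib$, $u = c + id$ and expanding both sides directly. In particular $\|\phi(v)\|_\mathbb{R} = \|v\|_\mathbb{C}$, so unit vectors go to unit vectors. Applied to a complex orthonormal set $\{v_1, \ldots, v_d\}$: for $j \neq k$ one has $(v_j, v_k)_\mathbb{C} = 0$, so all four pairings among $\phi(v_j), \phi(iv_j), \phi(v_k), \phi(iv_k)$ vanish; for $j = k$ the Hermitian value $(v_j, v_j)_\mathbb{C} = 1$ yields the $2 \times 2$ identity block of inner products between $\phi(v_j)$ and $\phi(iv_j)$. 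Hence each $B_\lambda^\mathbb{R}$ is indeed an orthonormal basis of $\mathbb{R}^{2d}$.

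For the AMUB bound, I would take $\lambda \neq \mu$ and any $w \in B_\lambda^\mathbb{R}$, $w' \in B_\mu^\mathbb{R}$; these are of the form $\phi(\alpha v)$ and $\phi(\beta u)$ with $v \in B_\lambda$, $u \in B_\mu$, $\alpha, \beta \in \{1, i\}$. The identity above gives
$$|(w, w')_\mathbb{R}| \in \{|\mathrm{Re}(v, u)_\mathbb{C}|,\ |\mathrm{Im}(v, u)_\mathbb{C}|\} \leq |(v, u)_\mathbb{C}| \leq \gamma_\mathbb{C}(\mathscr{B}),$$
which yields $\gamma_\mathbb{R}(\mathscr{B}(\mathbb{R})) \leq \gamma_\mathbb{C}(\mathscr{B})$, i.e.\ $N_\mathbb{R}(2d;\gamma) \geq N_\mathbb{C}(d;\gamma)$. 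Setting $\gamma = c/\sqrt{d} = c\sqrt{2}/\sqrt{D}$ with $D = 2d$ then gives the stated AMUB scaling. There is no deep obstacle here: the only thing that requires care is the bookkeeping of real and imaginary parts after the doubling, and in particular verifying that adjoining $\phi(iv_j)$ produces vectors that are still orthogonal to all the $\phi(v_k)$'s — this is handled uniformly by the identity above, which collapses every real inner product into the real or imaginary part of a complex one.
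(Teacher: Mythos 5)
Your proposal is correct and follows essentially the same route as the paper: the paper's two vectors $v^{(1)}$ and $v^{(2)}$ are exactly your $\phi(v)$ and $\phi(iv)$ up to a permutation of coordinates (interleaved versus block ordering), and the paper's key computation $(v,u)_{\mathbb{C}}=(v^{(1)},u^{(1)})+i\,(v^{(1)},u^{(2)})$ is the same identity you use to collapse every real inner product into the real or imaginary part of the complex Hermitian product. Both arguments then conclude identically via $|(v^{(i)},u^{(j)})|\leq|(v,u)_{\mathbb{C}}|\leq\gamma_{\mathbb{C}}(\mathscr{B})$.
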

\begin{proof}Before proving this theorem, we consider a mapping from any complex vector $v=(v_1,\ldots, v_d)\in \mathbb{C}^d$ to two real vectors $v^{(1)}$ and $v^{(2)}$
in $\mathbb{R}^{2d}$ where
$$v^{(1)}=(R(v_1),I(v_1),\ldots,R(v_d),I(v_d))$$
$$v^{(2)}=(-I(v_1),R(v_1),\ldots,-I(v_d),R(v_d))$$
and $R(v_i)$, $I(v_i)$ denote the real and imaginary parts of $v_i\in\mathbb{C}$ respectively. The mapping from $v$ to $v^{(1)}$ and $v^{(2)}$ has the following properties.
\begin{lemma}\label{4.2}
(1). If $v$ is a unit vector, then $v^{(1)}$ and $v^{(2)}$ are unit vectors.

(2). For $v, u\in \mathbb{C}^{(d)}$ and $1\leq i, j\leq 2$, $|(v^{(i)}, u^{(j)}) |\leq |(v,u)|$. Particularly, if $(v,u)=0$, then $v^{(1)}, v^{(2)}, u^{(1)}$ and $u^{(2)}$ are orthogonal to each others.
\end{lemma}
\begin{proof}
(1) is proved from
$$(v,v)=\sum_{\lambda=1}^dv_{\lambda}\overline{v}_{\lambda}=\sum_{\lambda=1}^d({R(v_\lambda)}^2+{I(v_\lambda)}^2)=(v^{(1)}, v^{(1)})=(v^{(2)}, v^{(2)}).$$
(2). From
\begin{align*}
(v,u)& =\sum_{\lambda=1}^{d}v_{\lambda}\overline{u_\lambda}=\sum_{\lambda=1}^{d}(R(v_{\lambda})+iI(v_{\lambda}))(R(u_{\lambda})-iI(u_{\lambda}))~(i=\sqrt{-1})\\
&=\sum_{\lambda=1}^{d}[R(v_\lambda)R(u_\lambda)+I(v_\lambda)I(u_\lambda)]+i\sum_{\lambda=1}^{d}[I(v_\lambda)R(u_\lambda)-R(v_\lambda)I(u_\lambda)]\\
&=(v^{(1)}, u^{(1)})+i(v^{(1)}, u^{(2)}).
\end{align*}
We know that ${|(v,u)|}^2=(v^{(1)},u^{(1)})^2+(v^{(1)},u^{(2)})^2$. Then from $(v^{(2)},u^{(2)})=(v^{(1)},\\u^{(1)})$ and $(v^{(2)},u^{(1)})=-(v^{(1)},u^{(2)})$ we get $|(v^{(i)},u^{(j)})|\leq|(v,u)|$ for all $1\leq i, j\leq 2$. If $(v, u)=0$, then $(v^{(i)}, u^{(j)})=0$ for all
$1\leq i, j\leq 2$. The orthogonality $(v^{(1)}, v^{(2)})=0=(u^{(1)}, u^{(2)})$ is derived directly from the definition of $v^{(i)}$ and $u^{(i)}$. This completes the proof of Lemma $\ref{4.2}$.
\end{proof}

Now we prove Theorem \ref{thm4.1}. Let $n=N_{\mathbb{C}}(d, \gamma)$ and $\mathscr{B}=\{B_1,\ldots,B_n\}$ be a set of $n$ orthonormal bases in $\mathbb{C}^d$ with $\gamma_\mathbb{C}(\mathscr{B})\leq \gamma$. For each basis $B_\lambda=\{a_1,\ldots, a_d\}$ of $\mathbb{C}^d$, we get a set $B_\lambda(\mathbb{R})$ in $\mathbb{R}^{2d}$ where
$$B_\lambda(\mathbb{R})=\{a_\lambda^{(1)},  a_\lambda^{(2)}: 1\leq \lambda\leq d\}.$$

By Lemma \ref{4.2} and $B_\lambda$ is an orthonormal basis in $\mathbb{C}^d$, we know that $B_\lambda(\mathbb{R})$ is an orthonormal basis in $\mathbb{R}^{2d}$. Let $\mathscr{B}(\mathbb{R})=\{B_1(\mathbb{R}),\ldots,B_n(\mathbb{R})\}$. Then from Lemma \ref{4.2} (2) we get $\gamma_\mathbb{R}(\mathscr{B}(\mathbb{R}))\leq \gamma_\mathbb{C}(\mathscr{B})\leq\gamma$. Therefore $N_\mathbb{R}(2d; \gamma)\geq n=N_\mathbb{C}(d; \gamma)$. This completes the proof of Theorem \ref{thm4.1}.
\end{proof}

The following table shows a list of parameters of real AMUB $\mathscr{B}$ in $\mathbb{R}^D (D=2d)$ derived from known series of complex AMUB in $\mathbb{C}^d$
is given by Theorem \ref{bounds},\ref{th3.2} and \ref{th3.3}.

\begin{table}[htbp]
\setlength\tabcolsep{1.8pt}
\centering
\caption{real AMUB's $\mathscr{B}$ in $\mathbb{R}^D (D=2d)$ with $n=|\mathscr{B}|$ and $\gamma_{\mathbb{R}}(\mathscr{B})\leq \gamma$}
\begin{tabular}{ccccccc}
\hline
No.&$D=2d$ & $n$ & $\gamma$ &\ $\rm{derived \ from}$  \\ \hline
(1) &$2p^m$& $p^m+1$& $\frac{1}{\sqrt{p^m}}=\frac{\sqrt{2}}{\sqrt{D}}$& $\rm{Theorem \ \ref{bounds}} (2)$\\ \hline
(2) &$2d~(d\geq 2)$& 3& $\frac{\sqrt{2}}{\sqrt{D}}$& \rm{Theorem \ \ref{bounds}}(3)\\ \hline
(3) &\makecell[c]{$2d^2$\\$(d\geq 6, d\not\equiv2\pmod 4)$}& \makecell[c]{6}& \makecell[c]{$\frac{\sqrt{2}}{\sqrt{D}}$}& \makecell[c]{\rm{Theorem \ \ref{bounds}}(4)}  \\ \hline
(4) &\makecell[c]{$2d~(d\geq3)$\\$(|d-p|\leq2\sqrt{p})$}& \makecell[c]{$p^{m-1}$\\$(2\leq m\leq d-1)$}& \makecell[c]{$\frac{2\sqrt{2}m}{\sqrt{D}}+\frac{4m}{D}$}& \makecell[c]{\rm{Theorem \ \ref{th3.2}} }  \\ \hline
(5) &$2d~(d=p^m-1)$& $p^m$& ${(\frac{1}{d}+\frac{1+2\sqrt{d+1}}{d^2})}^{1/2}$& {\rm{Theorem}}~\ref{th3.3}(1) \\ \hline
(6) &$2d~(d=p^m-1)$& $p^m+1$& ${(\frac{1}{d}+\frac{1}{d^2})}^{1/2}$& {\rm{Theorem}}~\ref{th3.3}(2) \\ \hline
(7) &$2d~(d=p^m+1)$& $p^m$& ${(\frac{1}{d}+\frac{2\sqrt{d-1}}{d^2})}^{1/2}$& {\rm{Theorem}}~\ref{th3.3}(3)\\ \hline
(8) &$2d~(d=2^r(2^r-1))$& $2^r+1$& $\frac{1}{2^r-1}$& {\rm{Theorem}}~\ref{th3.3}(4) \\ \hline
\end{tabular}
\end{table}

Table $1$ presents many series of real AMUB's with nice parameters comparing to real MUB's and previous construction of real AMUB's.

(A). For $D\neq{(4m)^2}$, the size of real MUB $\mathscr{B}$ in $\mathbb{R}^D$ is less than $3$ and $\gamma_\mathbb{R}(\mathscr{B})=\frac{1}{\sqrt{D}}$.
In Table $1$ (1) and (5)-(7), we get many series of real AMUB's $\mathscr{B}$ in $\mathbb{R}^D$ with size $\frac{D}{2}$ or $\frac{D}{2}\pm 1$ and $\gamma_\mathbb{R}(\mathscr{B})\sim\frac{\sqrt{2}}{\sqrt{D}}$.

(B). For $d={(4p)^2}$ and $p$ is an odd prime, Corollary \ref{cor3.5} presents real AMUB $\mathscr{B}$ in $\mathbb{R}^d$ with size $|\mathscr{B}|=\frac{\sqrt{d}}{4}+1$ and $\gamma_{\mathbb{R}}(\mathscr{B})\leq\frac{4}{\sqrt{d}}$. On the other hand, Table $1$ (4) (for $m=2$) presents real
AMUB $\mathscr{B}$ in $\mathbb{R}^d$ (for even $d\geq 6$) with $\gamma_{\mathbb{R}}(\mathscr{B})\leq\frac{4\sqrt{2}}{\sqrt{d}}+\frac{8}{d}$ and larger
size $|\mathscr{B}|(=p)\geq\frac{d}{2}+2-2\sqrt{\frac{d}{2}+1}$. Table $1$ (8) presents real AMUB $\mathscr{B}$ in $\mathbb{R}^d$ $(d=2^{r+1}(2^r-1))$ with larger
size $|\mathscr{B}|=2^r+1\sim\sqrt{\frac{d}{2}}$ and smaller $\gamma_{\mathbb{R}}(\mathscr{B})=\frac{1}{2^r-1}\sim\frac{\sqrt{2}}{\sqrt{d}}.$
\section{Conclusion}
The MUB's $\mathscr{B}$ in $F^d$ $(F=\mathbb{C}$ or $\mathbb{R}$) with larger size $n=|\mathscr{B}|$, particularly for real case, are rare. Several series of approximate MUB's (AMUB's) in $F^d$ with $\gamma_F(\mathscr{B})=O(\frac{1}{\sqrt{d}})$ have been constructed [10,  12, 7, 6]. In this paper we present
a general construction of real AMUB $\mathscr{B}(\mathbb{R})$ in ${\mathbb{R}}^{2d}$ from any complex AMUB $\mathscr{B}$ in $\mathbb{C}^d$ with the same size
$|\mathscr{B}(\mathbb{R})=|\mathscr{B}|$ and $\gamma_\mathbb{R}(\mathscr{B}(\mathbb{R}))\leq \gamma_{\mathbb{C}}(\mathscr{B})$ (Theorem \ref{thm4.1}). Namely,
from any $\frac{c}{\sqrt{d}}$-AMUB in $\mathbb{C}^d$ we get a $\frac{c\sqrt{2}}{\sqrt{D}}$-AMUB in ${\mathbb{C}}^D (D=2d)$ with the same size. With this result, several series of real AMUB's in $\mathbb{R}^d$ with even dimension $D$ and larger size can be directly derived from known complex AMUB's in contrast with that
the size of any real MUB in $\mathbb{R}^d$  for $d\neq {(4m)}^2$ is at most $3$.

For odd integer $d$, to construct real AMUB in $\mathbb{R}^d$ with larger size seems to be difficult. We will deal with this problem in sequential papers.

\end{document}